\newcommand{\abs}[1]{\left\lvert#1\right\rvert}
\newcommand{\hb}{{\bar h}}
\newcommand{\Hb}{{\bar H}}
\newtheorem{theorem}{Theorem}[section]
\newtheorem{lemma}[theorem]{Lemma}
\newtheorem{corollary}[theorem]{Corollary}
\newtheorem{proposition}[theorem]{Proposition}
\theoremstyle{remark}
\newtheorem{remark}[theorem]{Remark}
\definecolor{amethyst}{rgb}{0.6, 0.4, 0.8}
\renewcommand{\geq}{\geqslant}
\renewcommand{\leq}{\leqslant}
\title{Universality of the microcanonical entropy at large spin}
\author{Sridip Pal$^{\tau}$, Jiaxin Qiao$^{\tau+1}$, Balt C. van Rees$^{-1/\tau}$}
\affiliation{$^{\tau}$Walter Burke Institute for Theoretical Physics,  California Institute of Technology,  Pasadena, CA, USA  \\  
	$^{\tau+1}$ Laboratory for Theoretical Fundamental Physics, Institute of Physics, École Polytechnique Fédérale de Lausanne (EPFL), CH-1015 Lausanne, Switzerland\\
	$^{-1/\tau}$CPHT, CNRS, Ecole Polytechnique, Institut Polytechnique de Paris, Palaiseau, France }
\abstract{We consider rigorous consequences of modular invariance for two-dimensional unitary non-rational CFTs with $c > 1$. Simple estimates for the torus partition function can lead to remarkably strong results. We show in particular that the spectral density of spin-$J$ operators must grow like $\exp\left( \pi \sqrt{\frac{2}{3}(c-1) J} \right)/\sqrt{2J}$ in any twist interval at or above $(c-1)/12$, with a known twist-dependent prefactor. This proves that the large $J$ spectrum becomes dense even without averaging over spins. For twists below $(c-1)/12$ we establish that the growth must be strictly slower. Finally, we estimate how fast the maximal gap between two spin-$J$ operators must go to zero as $J$ becomes large.}
\begin{document}
\maketitle
\section{Introduction}
In this work we will be concerned with two-dimensional unitary ``non-rational'' CFTs. We assume a modular invariant torus partition function, central charge $c > 1$, a unique normalizable vacuum, and a twist gap in the spectrum of non-identity Virasoro primaries.\footnote{We note that an explicit construction of such theories is a difficult problem, in contrast to the abundant rational conformal field theories. Recent possible constructions of non-rational CFTs can be found in \cite{Antunes:2022vtb,Antunes:2024mfb}.} In such theories we consider the large-spin asymptotic behavior of the density
\begin{equation}
	\rho_J(\Delta)
\end{equation}
of local operators with scaling dimension $\Delta$ and spin $J$. A naive use of modular invariance dictates that
\begin{equation}
	\label{rhoJnaiveintro}
	2 \rho_J(J + 2h) \underset{J \to \infty}{\leadsto} \frac{1}{\sqrt{2J}} \exp\p{\pi \sqrt{\frac{2}{3}(c-1) J}} \rho_{\rm c}(h)\,,
\end{equation}
essentially due to the Virasoro identity block in the dual channel. Here $\rho_{\rm c}(h)$ is a continuous function supported at $h \geq (c-1)/24$, defined in equation \eqref{def:rho0} below, and $h$ is half the twist $\Delta - J$. Our main objective in this paper is to translate this well-known but mathematically imprecise statement into rigorous claims that provide quantitative predictions for the large-spin spectrum.

The main issue with equation \eqref{rhoJnaiveintro} has to do with averaging. For one, the large $J$ limit of
\begin{equation}
	\sqrt{2J} \exp\p{-\pi \sqrt{\frac{2}{3}(c-1) J}} \rho_J(J + 2h)
\end{equation}
is not at all expected to exist, at least not pointwise in $h$. Indeed, this would be strange: $\rho_{\rm c}(h)$ is continuous whereas $\rho_J(\Delta)$ is typically a sum of delta functions. Furthermore, rigorous statements in the present context generally follow from so-called \emph{Tauberian} theorems. Those, however, typically provide a result about an ``averaged'' large $J$ limit, that is only after summing both sides from $J = 0$ until some large $J_\text{max}$. So the precise question to be asked is \emph{what kind of averaging is required, both in $J$ and in $h$, to turn equation \eqref{rhoJnaiveintro} into a rigorous statement?}

Our answer is provided in theorem \ref{theorem:error} below. In short, averaging over $J$ is \emph{not} required, and if we average over $h$ with a smooth test function $\varphi(h)$ then the averaging window can shrink as fast as $J^{-1/4 + \epsilon}$ for any $\epsilon > 0$. We leave a precise statement to the main text and move on to discuss some of its consequences.

\subsection*{Corollaries}

First, our theorem implies that the spectrum becomes \emph{dense}: every twist interval overlapping with the support of $\rho_{\rm c}(h)$ will for large enough $J$ contain an arbitrarily large number of Virasoro primaries. In fact, we show that this result holds for \emph{every} sufficiently large spin $J$: individual spins cannot misbehave.\footnote{This excludes the unlikely but heretofore not disproven scenario where the spectrum for individual spins remains integer spaced but with offsets such that the spin-averaged spectrum becomes dense.}

Second, corollary \ref{corr:count} uses theorem \ref{theorem:error} to provide a precise result for the average operator density. Taking the logarithm of the stated result, we prove that the spin-$J$ microcanonical entropy $S_J(h)$ at twist $\Delta - J = 2h$ becomes \emph{universal} at large $J$. In equations:
\begin{align}\label{eq:entropy}
	h &\geq \frac{c-1}{24}\,: & &\,\,S_J(h) = \pi \sqrt{\frac{2}{3}(c-1)J} - \frac{1}{2}\log(2J) + S_{\rm c}(h) + o(J^0)\\
	h &< \frac{c-1}{24}\,: & \lim_{J \to \infty} &\,\,S_J(h) - \pi \sqrt{\frac{2}{3}(c-1)J} + \frac{1}{2}\log(2J)  = - \infty\nn
\end{align}
The last equation may appear strange, but essentially says that the entropy growth has to be strictly subleading to the shown behavior. The constant term $S_{\rm c}(h)$ corresponds to the microcanonical entropy associated to $\rho_{\rm c}(h)$. Here the microcanonical entropy can be defined as usual, \emph{i.e.}~as the density of spin-$J$ operators in a finite fixed interval centered around $h$. However, as in theorem \ref{theorem:error}, the above result holds even if one defines the microcanonical entropy by counting states in an interval whose size shrinks as $J^{-1/4 + \epsilon}$, for any $\epsilon > 0$. This strengthens and generalizes to all twists a similar result obtained earlier by two of us  \cite{Pal:2023cgk} for the large-spin operators with $h$ centered around $(c-1)/24$.\footnote{The related claim that $T_\text{gap} \leq A$ was discussed earlier in \cite{Collier:2016cls}, where it was credited to T. Hartman, and \cite{Afkhami-Jeddi:2017idc,Benjamin:2019stq}. }

Our third result concerns the \emph{maximal} spacing in the spectrum. Consider any finite open interval in $h$, supported in $[(c-1)/24, \, \infty)$. Of course the maximal spacing between operators in this interval must go to zero at large $J$, since a spectrum becoming dense implies that every fixed subinterval must eventually contain at least one operator. Corollary \ref{corr:spacing} provides a more quantitative estimate: the maximal spacing is given by
\begin{equation}
	o(J^{-1/4 + \epsilon})\,
\end{equation}
for any $\epsilon > 0$.\footnote{The astute reader will have noticed that this is not an immediate corollary of the previous paragraph. Our demonstration relies on local uniformity in $h$ that is proven in our main theorem \ref{theorem:error}.} Given the much more rapid growth of the average density, a power-law vanishing rate for the maximal spacing appears rather slow. It would be therefore be interesting to see if it can nevertheless be saturated, or if perhaps our estimate can be improved.

Finally we note that our bounds follow directly from simple estimates on the vacuum and non-vacuum terms in the modular invariant partition functions and do not make use of existing Tauberian theorems.

\subsection*{Context}
Modular invariance dictates the leading behavior of the torus partition function $Z(\beta)$ as $\beta\to 0$. Following Cardy \cite{Cardy:1986ie}, we can apply an inverse Laplace transform to the vacuum term in the dual channel to obtain an estimate of sorts for the large $\Delta$ limit of the density of states. A large $J$ estimate follows instead from considering the two-variable torus partition function $Z(\beta_L,\beta_R)$ as $\beta_R \to 0$. The naive equation \eqref{rhoJnaiveintro} is then obtained by applying \emph{two} inverse Laplace transforms to the leading term.

There are rigorous so-called \emph{Tauberian} theorems that discuss under what kind of averaging these operations are valid, see for example the textbook \cite{korevaar2004tauberian}. Tauberian theorems entered the field of the conformal bootstrap in 2012, when they were used in \cite{Pappadopulo:2012jk} to study the large $\Delta$ limit of the OPE density of CFT four-point functions in general dimensions.

Indeed, there is a strong analogy between the modular \cite{Hellerman:2009bu} and four-point function conformal bootstrap. In both cases we can expect to gain information about large $\Delta$ behavior by studying a diagonal limit, and about large $J$ behavior by studying an off-diagonal or lightcone limit. In fact, this qualitative similarity can be made precise in certain scenarios or kinematical regimes.
For example,  the sphere four-point function can be analyzed by mapping it to a pillow $\mathbb{T}^2/\mathbb{Z}_2$, where crossing symmetry manifests itself as modular covariance \cite{Zamolodchikov:1987avt,Maldacena:2015iua}. Such a mapping was leveraged to prove the absence of a bulk point singularity \cite{Maldacena:2015iua}, to investigate the possibility of reconstructing a correlator given the contribution from the ``light'' spectrum \cite{Maloney:2016kee}, and to find the large $\Delta$ behavior (light-light-heavy) of OPE coefficients \cite{Das:2017cnv}. A similar mapping was used in \cite{Hartman:2019pcd} to write the torus partition function of a CFT $\mathcal{A}$ as a four-point function of $\mathbb{Z}_2$-twist operators in $(\mathcal{A}\times\mathcal{A})/\mathbb{Z}_2$; as a result, modular invariance of the original partition function became crossing symmetry of the four-point function.  This analogy is explained in \cite{Pal:2022vqc} in the context of lightcone bootstrap. 

The diagonal limit for four-point functions was analyzed further in \cite{Qiao:2017xif,Mukhametzhanov:2018zja}, and for torus partition functions in \cite{Mukhametzhanov:2019pzy}. It is particularly interesting to contrast our results with prior bounds on the maximal gap between two Virasoro primary operators. For unitary 2D CFTs with $c>1$, the spectrum of Virasoro primaries $\{\Delta_n\}$ satisfies $\limsup \Delta_{n+1}-\Delta_n \leqslant 1$, as conjectured in \cite{Mukhametzhanov:2019pzy} and proven in \cite{Ganguly:2019ksp,Mukhametzhanov:2020swe}. The authors of \cite{Mukhametzhanov:2020swe} further refined the result with respect to spin: in unitary 2D CFTs with $c>1$, the spectrum of Virasoro primaries with spin $J$, $\{\Delta_{n,J}\}$, satisfies $\limsup \Delta_{n+1,J}-\Delta_{n,J} \leqslant 2$. Note that these works assumed only a gap on the spectrum of scaling dimensions of Virasoro primaries, \emph{i.e.} $\Delta_1>0$. Our work provides a stronger claim, but only if we further assume a twist gap. Indeed, translating our results from $J$ to $\Delta$ shows that we have proven that the level spacing asymptotes to zero at large scaling dimension, $\limsup(\Delta_{n+1}-\Delta_n)=0$, provided the twist $\Delta - J$ is contained in some bounded interval above $(c-1)/12$. Furthermore, the decay rate is faster than $\Delta^{-1/4+\epsilon}$ for any $\epsilon>0$.

A rigorous analysis of the off-diagonal or lightcone limit proved to be more stubborn, even though intuitive arguments also date back to 2012 \cite{Fitzpatrick:2012yx,Komargodski:2012ek}. A first theorem for four-point functions was provided two years ago in \cite{Pal:2022vqc}, but only for the large $J$ behavior of the \emph{leading} double-twist Regge trajectory. The authors of \cite{Pal:2022vqc} also analyzed torus partition functions and proved the existence of the predicted \cite{Collier:2016cls} infinite number of large-spin operators whose twists $\Delta - J$ accumulate at $(c-1)/12$. As we mentioned above, some further theorems on these ``leading twist'' operators were proved by two of us in \cite{Pal:2023cgk}.

Proving theorems for the lightcone bootstrap at higher twists is more difficult because of the need to perform \emph{two} inverse Laplace transforms, which naively invalidates the direct use of Tauberian theorems. Last year one of us was able to solve this problem \cite{vanRees:2024xkb} using Vitali's theorem in complex analysis. This led to a rigorous existence proof of \emph{all} the double-twist Regge trajectories in a general unitary CFT four-point function with a twist gap.

The application the methods of \cite{vanRees:2024xkb} to the modular bootstrap problem was the original motivation for this work. It turned out that we were able to obtain significantly stronger results, using neither Tauberian theorems nor Vitali's theorem. The reason we can go beyond standard Tauberian theorems is because we have more control over the partition function than just its $\beta_R \to 0$ limit. A crucial difference appears in the proof of lemmas \ref{lemma:Fnonvac1} and \ref{lemma:Fnonvac2}, where we use modular invariance \emph{twice} to bound the non-universal term. This procedure is reminiscent of the use of the bounds obtained in \cite{Hartman:2014oaa} to obtain results on the non-summed operator density at large $\Delta$ in \cite{Mukhametzhanov:2019pzy}.

The application of the ideas presented in this work to the four-point function bootstrap will appear elsewhere.

\subsection*{Connection to chaos and thermalization}
A natural physics context for our work emerges in the realms of chaos and thermalization. In many quantum systems, the eigenvalues of the Hamiltonian in an appropriate regime such as large energy exhibit statistical features. A standard way to define such statistics is to rescale the energy eigenvalues with the mean density, and to ask whether the distribution of energy eigenvalues in the rescaled variables has statistical features. In integrable theories, the distribution is Poissonian \cite{Berry:1977levelclustering} while in chaotic theories, the distribution mimics random matrix theory and exhibits features of chaos such as level repulsion \cite{Bohigas:1983er}.
Most famously the energy levels of atomic nuclei can be modeled by random matrix theory \cite{Wigner:1951stat,PorterThomas:1956,Mehta:1960stat,Mehta:1967book}. Another notable model is the distribution of the non-trivial zeros of the Riemann zeta function with large imaginary part \cite{montgomery1973pair,odlyzko1987distribution}. Yet another set of examples come from hyperbolic manifolds \cite{marklof2006arithmetic}, which has a close connection to the conformal bootstrap \cite{Kravchuk:2021akc}. 

In recent years, there has been growing amount of evidence of the relevance of chaotic CFTs in theoretical physics, in particular in the context of low dimensional holography and black hole physics \cite{Cotler:2016fpe} via AdS/CFT dualities. For example, Jackiw–Teitelboim (JT) gravity is dual to double-scaled random matrix theory~\cite{Saad:2019lba}. Beyond the realm of AdS/CFT we expect the high energy eigenstates to behave thermally in an ergodic quantum system, a phenomenon known as \textit{eigenstate thermalization hypothesis} a.k.a ETH \cite{Srednicki:1994mfb}. 

In CFTs, we can make progress in understanding chaos and thermalization with various physical assumptions. Our ability to do so is tied with the fact that CFT observables exhibit universality at large quantum numbers (say energy or spin). For example, evidence in support the ETH in 2D CFTs can be found in \cite{Kraus:2016nwo,Das:2017vej,Basu:2017kzo, Brehm:2018ipf,Romero-Bermudez:2018dim,Hikida:2018khg,Collier:2019weq,Chen:2024lji}. The signature of chaos in generic 2D CFTs has been studied via the butterfly effect \cite{Roberts:2014ifa}. To extract the features of chaos, effective field theories have been developed \cite{Haehl:2018izb,Anous:2020vtw,Altland:2020ccq,Choi:2023mab}. In recent years, harmonic analysis has proven to be a useful tool to characterize chaos, see \cite{Boruch:2025ilr,Haehl:2023mhf,Haehl:2023xys,DiUbaldo:2023qli, Collier:2021rsn,Haehl:2023tkr}, which is built upon \cite{Benjamin:2021ygh}. However, these result mostly rely on existence of chaotic CFTs. In particular, it is expected that the Cardy formula for the average density, of states is true even when averaged over a very tiny window in energy. Subsequently, the correction to this mean density is expected to exhibit chaotic features and has been studied through the lens of harmonic analysis in \cite{Boruch:2025ilr,Haehl:2023mhf,Haehl:2023xys,DiUbaldo:2023qli, Haehl:2023tkr}. This begs the question for what kind of CFTs these expectations are a reality.

As we discussed, to be able to define statistics of energy eigenstates and probe chaos, we should study the energy states at the scale of mean level spacing. Given the density of states in a 2d CFT has a Cardy like growth, a bare minimum ``green light'' to have a notion of statistics and hence chaos is to have a spectrum such that the maximum level spacing goes to $0$ for states with large quantum number. Furthermore, concepts like ETH rely on a single eigenstate behaving thermally. While there is a lot of evidence for ETH in 2D CFTs, most of these results are true when the microcanonical window has an order one size as opposed to having a single energy eigenstate. Thus a true probe of ETH would be when we can shrink the size of the microcanonical window much smaller than order one, possibly comparable to the mean level spacing and still have a universal statement about thermality.

The theorem that we prove in this paper shows that indeed the spacing goes to $0$ at large spin in non-rational CFTs. We see this as a minimal necessary condition for chaos. This provides with a reasonable hope that such theories are indeed chaotic. 

\subsection*{Overview}
Our paper is structured as follows. Section \ref{sec:setup} provides the exact setup and axioms. In section \ref{sec:laplacelargespin} we first explain that it suffices to estimate the Laplace transform 
\begin{equation}
	F(\beta_L,J) \colonequals 2 \int_{T_\text{gap}}^\infty dh\, \rho_J(J + 2h) e^{-\beta_L (h - A)}
\end{equation}
in the right half plane with $\beta_L > 0$. We then proceed to provide the necessary bounds and estimates, both for the universal term (corresponding to the vacuum in the dual channel) and the non-universal term. We then put everything together in section \ref{sec:theorem}, where we state and prove our main theorem and the two corollaries mentioned above. A brief outlook concludes the paper.

\section{Setup}
\label{sec:setup}
The object of our investigation will be the spin-$J$ spectral densities
\begin{equation}
    \rho_J(\Delta)\,
\end{equation}
of non-identity Virasoro primaries in a two-dimensional unitary CFT. For each integer\footnote{Note that $J$ is allowed to be negative. In a parity-symmetric theory $\rho_J(\Delta) = \rho_{-J}(\Delta)$, but we will not assume this. In this work we will always take the large positive $J$ limit, but our results hold equally in the large negative $J$ limit.} spin $J \in \Z$ the spectral density is positive and integrable over $\Delta \in \R$ in the Riemann-Stieltjes sense. Unitarity dictates that its support is limited to $\Delta \geq |J|$, but in this paper we further impose a \emph{twist gap} $2T_\text{gap} > 0$ such that for all $J$
\begin{equation}
	\supp\p{\rho_J(\Delta)} \subset \left\{\Delta \in \R \, : \, \Delta - |J| > 2T_\text{gap}\right\}\,.
\end{equation}
We will suppose that the growth at large $|J|$ and $\Delta$ is sufficiently benign such that
\begin{equation}
	\sum_J \int d\Delta \, \rho_J(\Delta) e^{- \beta \Delta - \mu J}
\end{equation}
is absolutely convergent as long as $\Re(\beta) > |\Re(\mu)| \geq 0$. 

To fix ideas consider the case where the theory has a discrete spectrum. Then we have
\begin{equation}
	\rho_J(\Delta) = \sum_k \delta(\Delta - \Delta_k^{(J)})\,,
\end{equation}
where each energy (or, more accurately, scaling dimension) $\Delta_k^{(J)}$ with $k \in \{1,2,3,\ldots\}$ lies at or above $|J| + 2T_\text{gap}$. Finiteness of the above sum-plus-integral implies further conditions: for example,  that there are only finitely many energies below any given threshold $\Delta_\text{max}$. However, in this paper we will not necessarily assume a discrete spectrum.

\subsubsection*{Two-variable spectral density}
It will be convenient to introduce $h = (\Delta- J)/2$ and $\hb = (\Delta + J)/2$ and to introduce the two-variable density
\begin{equation}\label{rhohhbdefn}
	\rho(h,\hb) \colonequals 2\sum_{J} \rho_J(h + \hb)\delta(J - \hb + h)\,,
\end{equation}
where the factor of 2 ensures that $\rho(h,\hb) = \sum_k \delta(h - h_k) \delta(\hb - \hb_k)$ in the case of a discrete spectrum. This density is supported in the region $h,\hb > T_\text{gap}$,  and such that
\begin{equation}
	\int_{T_\text{gap}}^\infty dh \int_{T_\text{gap}}^\infty d\hb \, \rho(h,\hb)e^{-\b_L h - \b_R \hb}
\end{equation}
is finite as long as the complex numbers $\b_L, \b_R$ have positive real parts.

Below we will also consider integrals of the form
\begin{equation}\label{finiteHbIntegral}
	\int^{\Hb}_{T_\text{gap}}  d\hb\,\rho(h,\hb)\,.
\end{equation}
By convention, the upper limit of the integral is understood as $\lim_{\epsilon \searrow 0} \int^{\Hb + \epsilon}(\ldots)$ which implies continuity from the right in $\Hb$. For each finite $\Hb$ this produces a positive density over $h$ which is again integrable in the Riemann-Stieltjes sense. An integral of the form $\int^{b}_{a}  d\hb$ is then understood as $\int^{b}_{T_\text{gap}}  d\hb-\int^{a}_{T_\text{gap}}  d\hb$.

\subsubsection*{Partition function}
Our goal will be to obtain universal constraints on $\rho_J(\Delta)$ at large $J$. We will do so using modular invariance of the CFT torus partition function, which reads:
\begin{multline}
		Z(\beta_L,\beta_R)\, \colonequals \frac{e^{A(\beta_L+\beta_R)}}{\eta(\beta_L)\,\eta(\beta_R)}\Big{[}\left(1-e^{-\beta_L}\right)\left(1-e^{-\beta_R}\right) 
		+\int_{T_{\rm gap}}^\infty dh\int_{T_{\rm gap}}^\infty d\bar{h}\,\rho(h,\bar{h})\,e^{-h\beta_L-\bar{h}\beta_R}\Big{]}\,.
\end{multline}
Here $\eta(\beta)$ is the Dedekind eta function,  which counts the Virasoro descendants, and
\begin{equation}
	A\colonequals \frac{c-1}{24}\,,
\end{equation}
with $c > 1$ the central charge of the theory. In a bona fide CFT the partition function is invariant under the modular transformations generated by:
\begin{equation}
	\begin{split}
		T:&\quad\beta_L,\beta_R\ \rightarrow\ \beta_L+2\pi i,\ \beta_R-2\pi i\,, \\
		S:&\quad\beta_L,\beta_R\ \rightarrow\ \frac{4\pi^2}{\beta_L},\ \frac{4\pi^2}{\beta_R}\,. \\
	\end{split}
\end{equation}
Invariance under $T$ simply reaffirms that the spins $J$ must be integers, but the invariance under $S$ is non-trivial.

In the following we will exclusively work with the \emph{Virasoro primary partition function} $\tilde Z(\beta_L,\beta_R) \colonequals \eta(\beta_L)\eta(\beta_R) Z(\beta_L,\beta_R)$, which reads
\begin{equation}\label{def:ZVir}
	\tilde Z(\beta_L,\beta_R)\, = e^{A(\beta_L+\beta_R)}\Big{[}\left(1-e^{-\beta_L}\right)\left(1-e^{-\beta_R}\right) 
	+\int_{T_{\rm gap}}^\infty dh\int_{T_{\rm gap}}^\infty d\bar{h}\,\rho(h,\bar{h})\,e^{-h\beta_L-\bar{h}\beta_R}\Big{]}\,,
\end{equation}
and for which invariance under $S$ gives the constraint:
\begin{equation}\label{ZVir:modinv}
	\begin{split}
		\tilde{Z}(\beta_L,\beta_R)=\sqrt{\frac{4\pi^2}{\beta_L\beta_R}}\tilde{Z}\left(\frac{4\pi^2}{\beta_L},\frac{4\pi^2}{\beta_R}\right)\,.
	\end{split}
\end{equation}
We will call equation \eqref{def:ZVir} the \emph{direct channel expansion}, and \eqref{def:ZVir} with the replacement $\beta_L,\beta_R\rightarrow\frac{4\pi^2}{\beta_L},\frac{4\pi^2}{\beta_R}$ the \emph{dual channel expansion}.

\subsubsection*{Expectations from the leading-order behavior}
Invariance under $S$ implies that,  when evaluated pointwise in $\beta_L$,  the Virasoro primary partition function diverges exponentially as $\beta_R$ approaches zero.  More precisely,  we can write
\begin{equation}\label{leadingmodular}
\widetilde{Z}(\beta_L,\beta_R) \underset{\beta_R\searrow 0}{\sim}\, \sqrt{\frac{4\pi^2}{\beta_L \beta_R}}\ e^{A \left(\frac{4\pi^2}{\beta_L} + \frac{4\pi^2}{\beta_R}\right)}\left(1-e^{-\frac{4\pi^2}{\beta_L}}\right)\,,
\end{equation}
where $f \sim \, g$ means $\lim f/g = 1$. The $\beta_L$ dependence on the right-hand side of equation \eqref{leadingmodular} is the Laplace transform of a density $\rho_{\rm c}(h)$, defined such that
\begin{equation}
	\label{eq:rho0implicit}
	\sqrt{ \frac{2\pi}{\beta_L}} e^{A \frac{4\pi^2}{\beta_L}} \left( 1 - e^{-\frac{4\pi^2}{\beta_L}} \right)= e^{A \beta_L}\int dh\, \rho_{\rm c}(h) e^{-\beta_L h}\,,
\end{equation}
and explicitly given by:
\begin{equation}\label{def:rho0}
	\begin{split}
		\rho_{\rm c}(h)=\begin{cases}
			\sqrt{\frac{2}{h-A}}\left[\cosh\left(4\pi\sqrt{A(h-A)}\right)-\cosh\left(4\pi\sqrt{(A-1)(h-A)}\right)\right]& h\geqslant A\,, \\
			 &\\
			0& h<A\,. \\
		\end{cases}
	\end{split}
\end{equation}
Equation \eqref{leadingmodular} then naively suggests that
\begin{equation}
	\rho(h,\hb) \overset{?}{\underset{\hb \to \infty}{\leadsto}} \rho_{\rm c}(\hb) \rho_{\rm c}(h) \underset{\hb \to \infty}{\sim} \frac{1}{\sqrt{2\hb}} e^{4 \pi \sqrt{A \hb}}  \rho_{\rm c}(h) \,,
\end{equation}
or, perhaps more intuitively, that
\begin{equation}
	\label{rhoJlargeJintuitive}
	2 \rho_J(J + 2h) \overset{?}{\underset{J \to \infty}{\leadsto}} \frac{1}{\sqrt{2J}} e^{4 \pi \sqrt{A J}} \rho_{\rm c}(h) 
\end{equation}
where we just substituted \eqref{rhohhbdefn} and $\hb = J + h$. This is equation \eqref{rhoJnaiveintro} in the introduction and, as we discussed there, it can only be true in some averaged sense.

\section{The Laplace transform at large spin}
\label{sec:laplacelargespin}
To smoothen out the distributional nature of $\rho_J(2h+J)$ we will integrate it against some test function $\varphi(h)$, like so:
\begin{equation}
	\label{desiredlargeJ}
	2 \int_{T_\text{gap}}^\infty dh\, \varphi(h) \rho_J(2h + J) \,.
\end{equation}
(The factor two ensures that a term $\delta(\Delta - \Delta_k)$ in $\rho_J(\Delta)$ contributes $\varphi(h_k)$ to the integral, with $h_k = (\Delta_k - J)/2$.) The interesting question will now be for which class of test functions $\varphi(h)$ the large $J$ limit is under control. Clearly,  $\varphi(h)$ cannot be a delta function,  since the pointwise large $J$ limit cannot exist.  But can it be a compactly supported function? If so, does it need to be smooth? And would it perhaps be possible to take $\varphi(h)$ to be $J$-dependent, so that its support shrinks with $J$?

We can write equation \eqref{desiredlargeJ} in Fourier space as
\begin{equation}
	\label{eq:largeJwithbeta}
	2 \int \frac{ds}{2\pi} \left[\int dh' \, \varphi(h') e^{(\beta_L + i s) h'} \right] \int_{T_\text{gap}}^\infty dh \, \rho_J(2h + J)  e^{- (\beta_L + i s) h}\,,
\end{equation}
where we introduced an auxiliary parameter $\beta_L$. Clearly the final result does not depend on $\beta_L$, but the integrals over $s$ and $h$ can be swapped only when $\beta_L > 0$. We are thus led to consider the behavior in the right half $\beta_L$ plane of
\begin{equation}
	F(\beta_L,J) \colonequals 2 \int_{T_\text{gap}}^\infty dh\, \rho_J(J + 2h) e^{-\beta_L (h-A)}\,,
\end{equation}
which will be our main object of study in this paper. The next proposition describes two elementary but important properties.

\begin{proposition}
For any fixed $J \in \Z$, $F(\beta_L,J)$ is analytic in the right half plane $\Re(\beta_L) > 0$. In this region it obeys the inequality:
\begin{equation}
	|F(\beta_L,J)| \leq F(\Re(\beta_L),J)\,.
\end{equation}
\end{proposition}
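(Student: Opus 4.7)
The proposition packages two standard facts about one-sided Laplace transforms of positive measures. The plan is to first establish absolute and locally uniform convergence of the defining integral on the open right half plane, then deduce analyticity via Morera's theorem and the norm bound via the triangle inequality together with positivity of $\rho_J$. I expect no real obstacle: the only care needed is justifying the swap of integrals, which is handled directly by the absolute convergence assumption already made in the setup.

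For the convergence estimate, I would fix an arbitrary compact set $K\subset\{\Re(\beta_L)>0\}$ and let $\delta:=\min_{\beta_L\in K}\Re(\beta_L)>0$. Then uniformly for $\beta_L\in K$ and $h\geq T_\text{gap}$,
\begin{equation}
\abs{e^{-\beta_L(h-A)}}=e^{-\Re(\beta_L)(h-A)}\leq e^{\delta A}\,e^{-\delta h}.
\end{equation}
Specialising the absolute convergence assumption of section \ref{sec:setup} (with $\mu=0$, $\Re(\beta)=\delta/2$, say) to the single-$J$ term shows that $\int dh\,\rho_J(J+2h)\,e^{-\delta h}<\infty$, so the integral defining $F(\beta_L,J)$ converges absolutely and uniformly on $K$.

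For analyticity, I would apply Morera's theorem. For any closed contour $\gamma\subset\{\Re(\beta_L)>0\}$, which lies in a compact set $K$ as above, the uniform bound justifies Fubini and gives
\begin{equation}
\oint_\gamma F(\beta_L,J)\,d\beta_L\;=\;2\int_{T_\text{gap}}^\infty dh\,\rho_J(J+2h)\oint_\gamma e^{-\beta_L(h-A)}\,d\beta_L\;=\;0,
\end{equation}
since for each fixed $h$ the integrand $\beta_L\mapsto e^{-\beta_L(h-A)}$ is entire. Hence $F(\,\cdot\,,J)$ is analytic on the right half plane.

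For the inequality, use positivity of the spectral density $\rho_J\geq 0$ (imposed in section \ref{sec:setup}) and the identity $\abs{e^{-\beta_L(h-A)}}=e^{-\Re(\beta_L)(h-A)}$ to estimate
\begin{equation}
\abs{F(\beta_L,J)}\;\leq\;2\int_{T_\text{gap}}^\infty dh\,\rho_J(J+2h)\,\abs{e^{-\beta_L(h-A)}}\;=\;F(\Re(\beta_L),J),
\end{equation}
which is the claimed bound.
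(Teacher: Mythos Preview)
Your argument is correct and standard; the paper does not actually supply a proof of this proposition, calling the two properties ``elementary but important'' and moving on, so there is nothing to compare against. One cosmetic slip: in your uniform bound on the compact set $K$ you write $e^{\delta A}e^{-\delta h}$, but since $h$ can lie below $A$ (when $T_\text{gap}<A$) you need to bound $\Re(\beta_L)$ from above as well, i.e.\ $e^{-\Re(\beta_L)(h-A)}=e^{\Re(\beta_L)A}e^{-\Re(\beta_L)h}\leq e^{MA}e^{-\delta h}$ with $M:=\max_{\beta_L\in K}\Re(\beta_L)$; this does not affect the rest of the argument.
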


\subsubsection*{Using the partition function}
We would like to write $F(\beta_L,J)$ in terms of the partition function. First we trivially write:
\begin{equation}
	F(\beta_L,J) = 2 \int_{-\pi}^{\pi} \frac{dt}{2 \pi} \, e^{( \alpha + i t) J} \sum_{\tilde J} \int_{T_\text{gap}}^\infty dh\, \rho_{\tilde J}(\tilde J + 2h) e^{-\beta_L (h-A)} e^{-( \alpha + i t) \tilde J}\,,
\end{equation}
where we introduced an auxiliary parameter $\alpha$. Clearly the final result does not depend on $\alpha$, but the sum over $\tilde J$ and integral over $t$ can be swapped only when $0 < \alpha < \beta_L$.

Now let us make two cosmetic changes. First we introduce the complex parameter
\begin{equation}
	z = \alpha + i t\,,
\end{equation}
and write the integral over $t$ as the integral of a contour $C_{\alpha}$ which (for now) goes straight from $\alpha -i \pi$ to $\alpha + i \pi$ in the complex $z$ plane. Second, we write the integral in terms of the density $\rho(h,\hb) = 2 \sum_J \rho_J(h + \hb) \delta(J - \hb -h)$ introduced in the equation \eqref{rhohhbdefn}. This leads to:
\begin{equation}
	F(\beta_L,J) = \int_{C_\alpha} \frac{dz}{2 \pi i} \, e^{z J}  \int_{T_\text{gap}}^\infty dh \int_{T_\text{gap}}^\infty d\hb \, \rho(h,\hb)  e^{- (\beta_L - z) (h-A) -z (\hb-A)}\,,
\end{equation}
The final two integrals yield almost the Virasoro primary partition function $\tilde Z(\beta_L-z,z)$ defined in equation \eqref{def:ZVir}. The mismatch is due to the Virasoro identity block, but since that only contributes at $J = -1, 0,1$ we are allowed to write
\begin{equation}
	F(\beta_L,J) = \int_{C_\alpha} \frac{dz}{2 \pi i} \, e^{z J}  \tilde Z(\beta_L - z, z) \quad \text{for} \quad |J| > 1\,.
\end{equation}
Our interest lies with large $J$, so from now on we will assume $|J| > 1$ even if we do not write it explicitly.

\subsubsection*{Using modular invariance}
Modular invariance now dictates that
\begin{equation}
	F(\beta_L,J) =  \int_{C_\alpha} \frac{dz}{2 \pi i} \, e^{z J}  \sqrt{\frac{4\pi^2}{z(\beta_L - z)}}\tilde Z\left(\frac{4\pi^2}{\beta_L - z}, \frac{4\pi^2}{z}\right)\,,
\end{equation}
which we can split into two parts using the dual channel expansion as:
\begin{align}\label{F:split}
	F(\beta_L,J) &= F_\text{vac}(\beta_L,J) + F_\text{non-vac}(\beta_L,J)\\
	F_\text{vac}(\beta_L,J) &\colonequals  \int_{C_\alpha} \frac{dz}{2 \pi i} \, e^{z J}  \sqrt{\frac{4\pi^2}{z(\beta_L - z)}}e^{\frac{4\pi^2 A}{\beta_L - z} + \frac{4\pi^2 A}{z}} \p{1 - e^{-\frac{4\pi^2}{\beta_L - z}}}\p{1 - e^{-\frac{4\pi^2}{z}}}\,,\nn\\
	F_\text{non-vac}(\beta_L,J) &\colonequals  \int_{C_\alpha} \frac{dz}{2 \pi i} \, e^{z J}  \sqrt{\frac{4\pi^2}{z(\beta_L - z)}}  \int_{T_\text{gap}}^\infty dh \int_{T_\text{gap}}^\infty d\hb \, \rho(h,\hb) e^{- \frac{4\pi^2}{\beta_L - z} (h-A) - \frac{4\pi^2}{z} (\hb -A)}\,.\nn
\end{align}
We already mentioned that $F(\beta_L,J)$ is independent of $\alpha = \Re(z)$, but this is not necessarily the case for $F_\text{vac}(\beta_L,J)$ and $F_\text{non-vac}(\beta_L,J)$ individually. From now on we will therefore fix:
\begin{equation}
	\label{eq:alphachoice}
	\alpha = 2 \pi \sqrt\frac{A}{J} \,,
\end{equation}
and $F_\text{vac}(\beta_L,J)$ and $F_\text{non-vac}(\beta_L,J)$ are always understood to be defined with this value of $\alpha$. This $J$-dependent choice of $\alpha$ requires some discussion.

\subsubsection*{Choosing the optimal $\alpha$}
Our main objective is to obtain the best possible constraints on the behavior at large $J$ of the non-universal term $F_\text{non-vac}(\beta_L,J)$. We do so in the subsection \ref{subsec:Fnonvacestimates}. To illustrate the logic we can look at equation \eqref{eq:Fnonvac1messy} below. This equation is valid for any $0 < \alpha < \Re(\beta_L)$, and provides an upper bound with a factor of the form
\begin{equation}
	\exp\p{\alpha J + \frac{4\pi^2}{\alpha} (A-T_\text{gap})}\,.
\end{equation}
Since $A > T_\text{gap}$, this bound is strongest at large $J$ if $\alpha$ scales like $1/\sqrt{J}$, say
\begin{equation}
	\alpha = 2 \pi \sqrt{\frac{A}{J}} \gamma
\end{equation}
for some finite $\gamma > 0$. (The same scaling is also required to obtain an optimal bound in equation \eqref{eq:Fnonvac2messy}.) The exact choice of $\gamma$ now matters little for our conclusions, but we have set $\gamma = 1$ to yield the optimal bound in the limit where $T_\text{gap}$ becomes very small. Our final choice \eqref{eq:alphachoice} is also the same as in \cite{Pal:2023cgk}, where it was used to discuss the large $J$ spectrum around $h=A$.

We note that the saddle point approximation for the vacuum term in the proof of lemma \ref{lem:Fvaclimit}(b) yields the same answer for all $0 < \gamma < 2$.

\subsection{Estimates for \texorpdfstring{$F_{\mathrm{vac}}(\beta_L,J)$}{Fvac(betaL,J)}}
\label{subsec:Fvacestimates}
The contribution of the dual channel vacuum term $F_\text{vac}(\beta_L,J)$ is a relatively straightforward integral. We will need two properties.

\begin{lemma}
\label{lem:Fvaclimit}
For any $\beta_L$ in the right half plane we have
\begin{align}
	&(a) & \left| F_{\rm{vac}}\p{\beta_L, J} \right| \leq &\frac{1}{\sqrt{2 J}} e^{4 \pi \sqrt{A J}} \frac{C_{\Re(\beta_L)}}{(\,1 + |\Im(\beta_L)|\,)^{3/2}}\\
	&(b) & F_\text{vac}\left(\beta_L,J \right) \underset{J \to \infty}{\sim}\, &\frac{1}{\sqrt{2 J}} e^{4 \pi \sqrt{A J}}\sqrt{\frac{2\pi}{\beta_L}}e^{\frac{4\pi^2 A}{\beta_L}} \p{1 - e^{-\frac{4\pi^2}{\beta_L}}}\,,
\end{align}
where the inequality holds for sufficiently large $J$.
\end{lemma}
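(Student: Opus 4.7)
I would use steepest-descent analysis of the integral defining $F_\text{vac}(\beta_L,J)$. The function $f(z) \colonequals zJ + 4\pi^2 A/z$ in the exponent has a unique saddle in the right half-plane at precisely $z = \alpha = 2\pi\sqrt{A/J}$, with value $f(\alpha) = 4\pi\sqrt{AJ}$ and curvature $f''(\alpha) = 8\pi^2 A/\alpha^3 = J^{3/2}/(\pi\sqrt{A})$. The contour $C_\alpha$ is the steepest-descent contour: parametrizing $z = \alpha + it$, a short computation gives $\Re f(z) - f(\alpha) = -4\pi^2 A\,t^2/[\alpha(\alpha^2+t^2)]$, which is $\leq -\pi\sqrt{AJ}$ for all $|t| \geq \alpha$. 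All remaining factors in the integrand are sub-exponential in $J$: on $C_\alpha$ one has $\Re(\beta_L-z)>0$, $|z|\geq\alpha>0$, and the two $(1-e^{-\cdots})$ factors are uniformly bounded.

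\textbf{Part (b).} I would split $C_\alpha$ into a central arc $|t|\leq\delta_J$ with, say, $\delta_J = J^{-3/8}$ (chosen so that $\delta_J\gg\alpha\sim J^{-1/2}$ while $\delta_J^2 f''(\alpha)\to\infty$) and two tails. On the tails the exponential bound above produces an $O(e^{3\pi\sqrt{AJ}})$ contribution, negligible relative to the claimed leading $\sim e^{4\pi\sqrt{AJ}}/\sqrt{J}$. On the central arc I would expand $f(z) = f(\alpha) - \tfrac12 f''(\alpha)\,t^2 + O(t^3/\alpha^3)$ in the exponential and replace the slowly varying prefactors by their values at $z = \alpha$. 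Since $\alpha \to 0$ as $J\to\infty$, these limits are
\begin{equation*}
\sqrt{\tfrac{4\pi^2}{\alpha\beta_L}} = \sqrt{\tfrac{2\pi}{\beta_L}}\p{J/A}^{1/4}\,, \qquad e^{4\pi^2 A/\beta_L}\p{1-e^{-4\pi^2/\beta_L}}\,, \qquad 1-e^{-4\pi^2/\alpha}\to 1\,,
\end{equation*}
the last because $\Re(4\pi^2/\alpha) = 2\pi\sqrt{J/A}\to\infty$. The remaining Gaussian integral yields $\sqrt{2\pi/f''(\alpha)}/(2\pi) = A^{1/4}/(\sqrt{2}\,J^{3/4})$, and multiplying all factors gives exactly the claimed asymptotic.

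\textbf{Part (a).} For the uniform bound I would factor out the $\beta_L$-dependence. The only factors depending on $\beta_L$ are $\sqrt{1/(\beta_L-z)}\,e^{4\pi^2 A/(\beta_L-z)}\p{1-e^{-4\pi^2/(\beta_L-z)}}$. Using the elementary inequality $|1-e^{-w}|\leq\min(2,|w|)$ for $\Re w\geq 0$, together with $|\beta_L-z|\geq\max\p{\Re\beta_L-\alpha,\,|\Im\beta_L|-\pi}$ on $C_\alpha$, one checks that this combination is bounded by a constant depending on $\Re\beta_L$ times $(1+|\Im\beta_L|)^{-3/2}$. After extracting this factor, the remaining $\beta_L$-independent integral of $|e^{zJ}||e^{4\pi^2 A/z}||z|^{-1/2}|1-e^{-4\pi^2/z}|$ is bounded by the same saddle-point analysis used in (b): on the Gaussian window of width $\sim J^{-3/4}$ around $t=0$ one has $|z|^{-1/2}\approx\alpha^{-1/2}$, producing an $e^{4\pi\sqrt{AJ}}/\sqrt{2J}$ times an $O(1)$ constant for $J$ sufficiently large, and the tail $|t|\geq\alpha$ is exponentially subleading.

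\textbf{Main obstacle.} The main technical point is rigorously controlling the errors in the Gaussian approximation in (b): both the cubic remainder in the Taylor expansion of $f$ and the linear corrections to the prefactors must be shown to contribute $o(1)$ relative to the leading term. This is straightforward once one verifies the hierarchy of the three scales $\alpha\sim J^{-1/2}$, the cutoff $\delta_J\sim J^{-3/8}$, and the Gaussian width $f''(\alpha)^{-1/2}\sim J^{-3/4}$. For (a), the analogous subtlety is to ensure genuine uniformity in $\Im(\beta_L)$ once the $(1+|\Im\beta_L|)^{-3/2}$ factor has been pulled out — the $J$-dependent constants produced by the saddle-point bound must themselves be independent of $\Im(\beta_L)$, which they are because the $\beta_L$-independent remainder integrand and the contour are both unchanged.
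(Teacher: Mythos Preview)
Your approach for part (a) is essentially the paper's: factor out the $\beta_L$-dependent combination, bound it via $|1-e^{-w}|\leq|w|$ for $\Re w\geq 0$ to produce the $(1+|\Im\beta_L|)^{-3/2}$ decay, and then estimate the remaining $\beta_L$-independent saddle-type integral.

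For part (b) your strategy is valid but there is a genuine slip in the choice of cutoff. You take $\delta_J=J^{-3/8}$, explicitly requiring $\delta_J\gg\alpha\sim J^{-1/2}$. But the Taylor expansion of $f(\alpha+it)$ about $t=0$ has radius of convergence $\alpha$ (the integrand has a pole at $z=0$, i.e.\ $t=i\alpha$), so it cannot control the central arc $|t|\leq\delta_J$. Concretely the cubic remainder in the exponent is of order $\alpha J\,(t/\alpha)^3\sim\sqrt{J}\,(t/\alpha)^3$ (your stated $O(t^3/\alpha^3)$ is missing this $\sqrt{J}$), which at $|t|=\delta_J$ is of order $J^{7/8}\to\infty$; likewise the prefactor $|z|^{-1/2}$ is not approximately $\alpha^{-1/2}$ once $|t|\gtrsim\alpha$. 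The fix is to place $\delta_J$ in the window $J^{-3/4}\ll\delta_J\ll J^{-1/2}$, say $\delta_J=J^{-5/8}$: then the Taylor remainder is $o(1)$ uniformly on the arc, and the tail is still suppressed because $\Re f(\alpha+it)-f(\alpha)=-4\pi^2A\,t^2/[\alpha(\alpha^2+t^2)]\leq -\tfrac12 f''(\alpha)\,\delta_J^2\to-\infty$ for $|t|\geq\delta_J$.

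By contrast, the paper avoids this scale-juggling altogether: it rescales $w=\sqrt{J}\,z$ so that the saddle sits at the $J$-independent point $w_*=2\pi\sqrt{A}$ with large parameter $\sqrt{J}$ multiplying the exponent, and then deforms the contour onto an arc of the steepest-descent circle $|w|=2\pi\sqrt{A}$. On that fixed compact arc the prefactor converges uniformly as $J\to\infty$, making the saddle-point step textbook; the straight segments connecting to the endpoints are shown to lie in the region $\Re f(w)<f(w_*)$ and are thus exponentially negligible. Your direct Laplace analysis on the original vertical line is arguably more elementary (no deformation needed), but only once the cutoff is placed on the correct side of $\alpha$.
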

This lemma in particular says that the leading large $J$ growth of $F_\text{vac}(\beta_L,J)$ is of the form
\begin{equation}
	\label{largeJtermFvac}
	e^{4 \pi \sqrt{AJ}}/\sqrt{2J}
\end{equation}
It is essential to keep this behavior in mind for the remainder of the paper. In particular, meaningful conclusions about the large $J$ limit can only be obtained if the non-universal term $F_\text{non-vac}(\beta_L,J)$ can be shown to be $o(e^{4 \pi \sqrt{AJ}}/\sqrt{2J})$ at large $J$. We will provide a much better estimate in the next subsection, but let us first prove the lemma.

\begin{proof}[Proof of lemma \ref{lem:Fvaclimit}(a)]
	Let us set $\beta_L = \beta + i s$ with $\beta > 0$ and $s \in \R$. Starting from the definition in equation \eqref{F:split} we easily obtain the estimate:
	\begin{equation}\label{Fvac:generalbound}
		\begin{split}
			\left|F_\text{vac}\p{\beta + i s,J}\right| &\leq 2 e^{\alpha J} \alpha^{-1/2} e^{\frac{4\pi^2 A}{\beta - \alpha}} \,\left(\frac{8\pi^2\left[1+\pi+2(\beta-\alpha)\right]}{(\beta-\alpha)(1+\abs{s})}\right)^{3/2}\int_{-\pi}^\pi \frac{dt}{2\pi}   e^{\frac{4 \pi^2 A \alpha}{\alpha^2 + t^2}} .
		\end{split}
	\end{equation}
	Here we used that $\abs{e^{x}}\leqslant e^{{\rm Re}(x)}$, $\alpha\leqslant|z|$, and $\abs{1-e^{-4\pi^2/z}}\leqslant 2$ along the integration contour. The only subtle term is:
	\begin{equation}
		\begin{split}
			\abs{\sqrt{\frac{4\pi^2}{\beta_L - z}} \p{1 - e^{-\frac{4\pi^2}{\beta_L - z}}}}&\leqslant\abs{\frac{4\pi^2}{\beta_L-z}}^{3/2}\leqslant\left(\frac{8\pi^2\left[1+\pi+2(\beta-\alpha)\right]}{(\beta-\alpha)(1+\abs{s})}\right)^{3/2}
		\end{split}
	\end{equation}
	which again holds along the $z$ integration contour\footnote{The second inequality can be proven by considering two cases:  1) $|s|\leq 2\pi+1+2(\beta-\alpha)$ and showing that $|\beta_L-z|\geqslant (\beta-\alpha)\geqslant \frac{(\beta-\alpha)(1+\abs{s})}{2\left[1+\pi+2(\beta-\alpha)\right]}$   and 2) for the complementary regime $|\beta_L-z|\geqslant (|s|-\pi) \geqslant \frac{1+|s|}{2}\geqslant  \frac{(\beta-\alpha)(1+\abs{s})}{2\left[1+\pi+2(\beta-\alpha)\right]}$. }.  We recall that $\alpha = 2 \pi \sqrt{A/J}$ so for large $J$ we have $\alpha \to 0$. The remaining integral then limits to its saddle point value
	\begin{equation}
		\int_{-\pi}^\pi \frac{dt}{2\pi}   e^{\frac{4 \pi^2 A \a}{\a^2 + t^2}} \underset{J \to \infty}{\sim} \frac{\alpha^{3/2}}{4\pi^{3/2}\sqrt{A}} e^{\frac{4\pi^2 A}{\a}}\,,
	\end{equation}
	so if we take $J$ sufficiently large it will certainly be less than twice this value. We arrive at:
	\begin{equation}
		\begin{split}
			\left|F_\text{vac}\p{\beta + i s,J}\right| &\leq  \frac{\alpha}{\pi^{3/2}\sqrt{A}} \,\left(\frac{8\pi^2\left[1+\pi+2(\beta-\alpha)\right]}{(\beta-\alpha)(1+\abs{s})}\right)^{3/2}e^{\alpha J+\frac{4\pi^2A}{\alpha}+\frac{4\pi^2 A}{\beta - \alpha}}\\
			&\leqslant C_\beta\,J^{-1/2}\,e^{4\pi\sqrt{AJ}}(1+\abs{s})^{-3/2}
		\end{split}
		\end{equation}
	with
	\begin{equation}
		C_{\beta}=128\,\pi^{5/2}\,\left(1+\frac{1+\pi}{\beta}\right)^{3/2}e^{\frac{8\pi^2A}{\beta}}\,,
	\end{equation}
	where we also required that $J \geq 16\pi^2 A/ \beta^2$.
\end{proof}

\begin{figure}[t]
	\centering
	\includegraphics[width=0.5\textwidth]{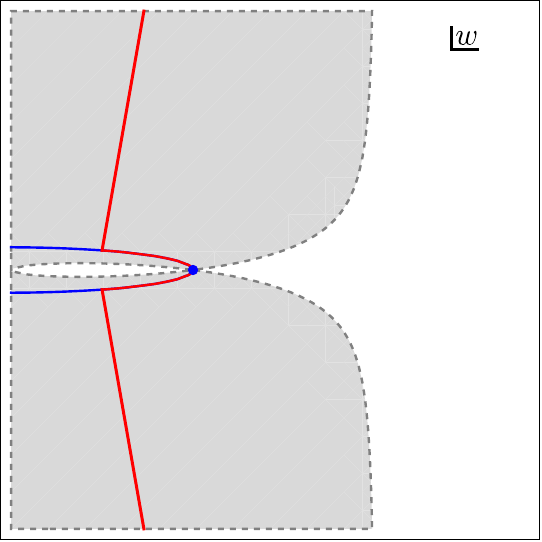} 
	\caption{In red we show the integration contour in the complex $w$ plane. It follows the steepest descent contour, in blue,  in the vicinity of the saddle point, marked with blue dot. Along the rest of the contour the integrand is exponentially suppressed because it lies entirely in the gray shaded region,  which is given by equation \eqref{exponentiallysmall}.}
	\label{fig:contour}
\end{figure}

\begin{proof}[Proof of lemma \ref{lem:Fvaclimit}(b)]
We will show the lemma by using a saddle point approximation to the full integral. We again start from the definition of $F_{\rm vac}(\beta_L, J)$ given in equation \eqref{F:split}. After substituting \(\alpha=2\pi\sqrt{A/J}\) and changing variables to \(w = \sqrt{J} z\) we obtain
\begin{equation}
	\begin{split}
		F_{\rm vac}\left(\beta_L,J\right) &= \frac{1}{J^{1/4}}  \int_{2\pi\sqrt{A}-i\pi\sqrt{J}}^{2\pi\sqrt{A}+i\pi\sqrt{J}} \frac{dw}{2\pi i} \,e^{\sqrt{J}f(w)}\,g(\beta_L,w,J),  \\
		f(w) &= w+\frac{4\pi^2A}{w}, \\
		g(\beta_L,w,J) &= \sqrt{\frac{4 \pi^2}{w(\beta_L-w/\sqrt{J})}} e^{\frac{4\pi^2A}{\beta_L-w/\sqrt{J}}} \left(1 - e^{-\frac{4\pi^2}{\beta_L - w/\sqrt{J}}}\right)\left(1 - e^{-\frac{4\pi^2\sqrt{J}}{w}}\right).
	\end{split}
\end{equation}
The integrand develops a saddle point when $f'(w_*) = 0$, which corresponds to:
\begin{equation}
	w_* = 2 \pi \sqrt{A}\,,
\end{equation}
and the steepest descent contour is given by a circle,
\begin{equation}
	|w|^2 = 4 \pi^2 A\,.
\end{equation}
We can freely deform the $w$ integration contour within the strip
\begin{equation}
	\label{Fvacstrip}
	0 < \Re(w) \ll \Re(\beta_L) \sqrt{J}\,,
\end{equation}
where both \(f(w)\) and \(g(\beta_L,J,w)\) are holomorphic in \(w\). We will take the contour to consist of two parts: a part $C_1$ that follows the steepest descent contour for a finite amount, say until $\Re(w)$ has decreased from $2 \pi \sqrt{A}$ to $\pi \sqrt{A}$, and another part $C_2$ that connects this circular arc straight to the endpoints at $w = 2 \pi \sqrt{A} \pm i \pi \sqrt{J}$. See figure \ref{fig:contour}.

The $C_1$ segment contains $w_*$. Also, $g(\beta_L,w,J)$ converges uniformly to its limit $g(\beta_L,w,\infty)$, so we may use the saddle point approximation. This means that the large $J$ contribution from $C_1$ is given by
\begin{equation}
	\frac{e^{\sqrt{J} f(w_*)}}{\sqrt{2\pi J |f''(w_*)|}} g(\beta_L,w_*,\infty) = \frac{1}{\sqrt{2 J}} e^{4 \pi \sqrt{A J}}\sqrt{\frac{2\pi}{\beta_L}}e^{\frac{4\pi^2 A}{\beta_L}} \p{1 - e^{-\frac{4\pi^2}{\beta_L}}}\,,
\end{equation}
which is the limit as claimed in the statement of the lemma.

It remains to show that the contribution from $C_2$ is subleading. Since we are in the strip given by equation \eqref{Fvacstrip}, we can first of all estimate
\begin{equation}
	|g(\beta_L, w, J)| \leq 4 \sqrt{\frac{4 \pi^2}{|w| (\Re(\beta_L)-\Re(w)/\sqrt{J})}} e^{\frac{4\pi^2A}{\Re(\beta_L)-\Re(w)/\sqrt{J}}}\,,
\end{equation}
where we also used that $|1 - e^{-1/z}| < 2$ if $\Re(z) > 0$. Using also that $|w| \geq 2 \pi \sqrt{A}$, we conclude that $|g(\beta_L, w, J)|$ is uniformly bounded along $C_2$. We can then achieve an exponential suppression of the integrand if $C_2$ lies entirely in the region where
\begin{equation}
	\label{exponentiallysmall}
	\left|e^{\sqrt{J} f(w)}\right| < e^{\sqrt{J} f(w_*)}\,,
\end{equation}
which is the gray shaded region in figure \ref{fig:contour}. Graphically this is seen to be possible, but only if the endpoints at $w = 2 \pi \sqrt{A} \pm i \pi \sqrt{J}$ also lie inside the gray region. We therefore need
\begin{equation}
	\left.\Re\p{ w + \frac{4 \pi^2 A}{w}}\right|_{w = 2 \pi \sqrt{A} \pm i \pi J} < 4 \pi \sqrt{A}\,,
\end{equation}
which is true if $J$ is sufficiently large. The length of the $C_2$ segment grows like $\sqrt{J}$ which is not enough to offset the exponential suppression, so the lemma follows.
\end{proof}

\subsection{First bound on \texorpdfstring{$F_\text{non-vac}(\beta_L,J)$}{Fnonvac(betaL,J)}}
\label{subsec:Fnonvacestimates}
We recall the definition of the non-vacuum term in equation \eqref{F:split}:
\begin{equation}
\begin{split}
		\label{Fnonvacrepeated}
		F_\text{non-vac}(\beta_L,J) &\colonequals \int_{C_\alpha} \frac{dz}{2 \pi i} \, e^{z J}  \sqrt{\frac{4\pi^2}{z(\beta_L - z)}}  \int_{T_\text{gap}}^\infty dh \int_{T_\text{gap}}^\infty d\hb \, \rho(h,\hb) e^{- \frac{4\pi^2}{\beta_L - z} (h-A) - \frac{4\pi^2}{z} (\hb -A)}\,.
\end{split}
\end{equation}
As before, $z = \a + i t$ with $\alpha = 2 \pi \sqrt{A/J}$, and the contour $C_\alpha$ corresponds to an integral over $t$ from $-\pi$ to $\pi$.  In this subsection we will prove the following bound.

\begin{proposition}\label{prop:Fnonvaccomplex}
	For any $\beta>0$ and sufficiently large $J$,
	\begin{equation}
		\begin{split}
			\abs{F_{\rm{non-vac}}\left(\beta+ is,J\right)}&\leqslant C_\beta\,
			\sqrt{J}\,e^{4\pi \sqrt{AJ} \left(1 - \tau\right) +\frac{2A}{\beta} s^2},
		\end{split}
	\end{equation}
	where
	\begin{equation}
		\tau =\min\left\{\frac{T_{\rm gap}}{2A}, \frac{3}{8}\right\}\,. \\
	\end{equation}
	and $C_\beta$ is independent of $s$ and $J$.
\end{proposition}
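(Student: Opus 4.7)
The plan is to bound $|F_{\text{non-vac}}(\beta + is, J)|$ in three steps: absorb the absolute value into the contour integral using positivity of $\rho$, apply modular invariance twice together with the twist gap to control the inner integral, and then estimate the remaining $t$-integral.

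First, on the contour $z = \alpha + it$, $t \in [-\pi, \pi]$, one has $|e^{zJ}| = e^{\alpha J}$, and positivity of $\rho$ gives
\begin{equation*}
\left| \int \rho(h, \bar h)\, e^{-\frac{4\pi^2 (h - A)}{\beta_L - z} - \frac{4\pi^2 (\bar h - A)}{z}} \right| \leq \mathcal{I}(b_L', b_R') := \int \rho(h, \bar h)\, e^{-b_L'(h - A) - b_R'(\bar h - A)},
\end{equation*}
where $b_L' = 4\pi^2 (\beta - \alpha) / [(\beta - \alpha)^2 + (s - t)^2]$ and $b_R' = 4\pi^2 \alpha / (\alpha^2 + t^2)$ are the real parts of $4\pi^2/(\beta_L - z)$ and $4\pi^2/z$.

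Second, I would bound $\mathcal{I}(b_L', b_R')$ using modular invariance twice in conjunction with the twist gap. Since the direct-channel identity is non-negative, $\mathcal{I}(b_L', b_R') \leq \tilde{Z}(b_L', b_R')$. The first modular transformation rewrites this as $\sqrt{4\pi^2/(b_L' b_R')}\, \tilde{Z}(4\pi^2/b_L', 4\pi^2/b_R')$, and the dual-channel expansion of $\tilde{Z}$ contains an identity term plus a $\rho$-integral with dual arguments. For the non-identity piece, the twist gap $h, \bar h \geq T_{\text{gap}}$ combined with the elementary splitting
\begin{equation*}
e^{-4\pi^2 h / b_L'} \leq e^{-4\pi^2 \theta T_{\text{gap}} / b_L'}\, e^{-4\pi^2 (1 - \theta) h / b_L'} \qquad (\theta \in (0, 1)),
\end{equation*}
and similarly in $\bar h$, peels off a twist-gap suppression factor. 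The residual $\rho$-integral is bounded by a partition function, and a second application of modular invariance expresses it in terms of $\tilde{Z}$ at moderate arguments, which is uniformly bounded. Optimizing over the splitting parameters produces the effective replacement $A \to A - T_{\text{gap}}$ in the Cardy exponent.

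Third, I would estimate the $t$-integral with $\alpha = 2\pi \sqrt{A/J}$. The relevant exponent is convex in $t$, so the integral is dominated by the endpoints $t = \pm \pi$. The twist-gap-controlled contribution yields $e^{4\pi \sqrt{AJ}(1 - T_{\text{gap}}/(2A))}$, while the dual identity contribution, which cannot be twist-gap-suppressed, contributes at most $e^{4\pi \sqrt{AJ}(1 - 3/8)}$: at the endpoints, the dominant part of the exponent is $\alpha J + A \pi^2 / \alpha = (5\pi/2)\sqrt{AJ}$. The worse of the two bounds gives $\tau = \min\{T_{\text{gap}}/(2A), 3/8\}$. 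The Gaussian factor $e^{2 A s^2/\beta}$ emerges from the growth of $4\pi^2/b_L' \sim s^2/\beta$ near the endpoints for large $|s|$, and the $\sqrt{J}$ prefactor from polynomial factors in the integrand.

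The main obstacle is step two: chaining the two modular transformations with the twist-gap splitting so that the optimal exponent emerges, and ensuring that the intermediate partition function is evaluated in a regime where it can be controlled uniformly. Tracking the polynomial $\sqrt{J}$ and $\beta$-dependent prefactors through each step, while ensuring uniformity in $s$, is the principal technical task.
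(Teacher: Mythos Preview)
Your overall architecture is reasonable, but step three contains a genuine gap that breaks the argument. The claim that ``the relevant exponent is convex in $t$, so the integral is dominated by the endpoints'' is not correct for the twist-gap-controlled piece, and relatedly the claim in step two that the second modular transformation yields $\tilde Z$ at ``moderate arguments, uniformly bounded'' fails near $t=0$.

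Concretely: with $b_R' = 4\pi^2\alpha/(\alpha^2+t^2)$ one has $B_R \colonequals 4\pi^2/b_R' = (\alpha^2+t^2)/\alpha$. At $t=0$ this is $B_R=\alpha$, so after your $\theta$-splitting and second modular inversion the partition function is evaluated at $b_R'/(1-\theta) = 4\pi^2/((1-\theta)\alpha) \sim \sqrt{J}$, which is not moderate. Using the vacuum-dominated bound $\tilde Z(\cdot,\cdot)\leq C\,e^{A(\cdot+\cdot)}$ there contributes $e^{4\pi^2 A/((1-\theta)\alpha)} = e^{2\pi\sqrt{AJ}/(1-\theta)}$, and together with $e^{\alpha J}=e^{2\pi\sqrt{AJ}}$ the total is $e^{2\pi\sqrt{AJ}(1+1/(1-\theta))}$. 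For $\theta=0$ this equals $e^{4\pi\sqrt{AJ}}$ (no suppression), and for any $\theta>0$ it is strictly \emph{larger}. So the integrand near $t=0$ is at least as big as the vacuum term, contradicting both the convexity claim and the desired bound. The underlying reason is that $|e^{-4\pi^2(\hb-A)/z}|$ is maximized at $t=0$ when $\hb<A$ but at $t=\pm\pi$ when $\hb>A$; a single uniform-in-$\hb$ bound cannot capture both regimes.

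The paper's cure is precisely to split the $\hb$-integral at $\hb=A$ \emph{before} taking the supremum over $t$. For $\hb\leq A$ the supremum occurs at $t=0$, giving $e^{-4\pi^2(\hb-A)/\alpha}$; then the twist gap $\hb\geq T_\text{gap}$ directly extracts $e^{4\pi^2(A-T_\text{gap})/\alpha}$, yielding the exponent $\alpha J + 4\pi^2(A-T_\text{gap})/\alpha = 4\pi\sqrt{AJ}(1-T_\text{gap}/(2A))$ with no $\theta$-parameter needed. For $\hb\geq A$ the supremum occurs at $t=\pm\pi$, and one gets the $e^{(5\pi/2)\sqrt{AJ}}$ contribution you identified. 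In both pieces the residual $\rho$-integral is bounded by $\tilde Z$ at arguments that, after one more modular inversion, genuinely are $\geq 2\pi$ so that the bound $\tilde Z\leq C\,e^{A(\cdot+\cdot)}$ applies. Your $\theta$-splitting idea is not needed once this dichotomy is in place.
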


The fact that $\tau > 0$ ensures that the large $J$ growth of $F_\text{non-vac}(\beta_L,J)$ is significantly slower than that of $F_\text{vac}(\beta_L,J)$, which was $e^{4\pi \sqrt{AJ}}/\sqrt{2J}$ as we discussed in the previous subsection. The price we had to pay was the introduction of the factor $e^{\frac{2A}{\beta} s^2}$. It has rapid growth at large $|s|$,  which will somewhat complicate our analysis in the next section. Note also that the estimate changes when $T_\text{gap}$ crosses $3A/4$, but this value might merely be an artifact of our approximations without physical significance.\footnote{We point out that the quantity $3A/4$ appears also in \cite{Benjamin:2019stq, Kusuki:2018wpa,Collier:2018exn}. }

Bounding $F_\text{non-vac}(\beta_L,J)$ starts with the elementary observation that, given a real $\mu$,
\begin{equation}
	\label{eq:supexp}
	\sup_{t \in (-\pi,\pi)}\left|\exp\p{- \mu \frac{4\pi^2}{z}} \right| = \sup_{t \in (-\pi,\pi)} \exp\p{-\mu \frac{4\pi^2 \alpha}{\alpha^2 + t^2}} = 
	\begin{cases}
	\exp( - \mu \frac{4\pi^2}{\a}) &\quad \text{if } \mu \leq 0\,,\\
	\exp( - \mu \frac{4\pi^2\a}{\a^2 + \pi^2}) &\quad \text{if } \mu \geq 0\,.
	\end{cases}
\end{equation}
The two different possibilities motivate splitting the $\hb$ integral in equation \eqref{Fnonvacrepeated} at $A$, so we write:
\begin{align}
	\label{eq:Fnonvacsplit}
	F_\text{non-vac}(\beta_L,J,\alpha) &= F_\text{non-vac}^{(1)}(\beta_L,J,\alpha) + F_\text{non-vac}^{(2)}(\beta_L,J,\alpha)\,,\\
	F_\text{non-vac}^{(1)}(\beta_L,J,\alpha) &\colonequals  \int_{C_\alpha} \frac{dz}{2 \pi i} \, e^{z J}  \sqrt{\frac{4\pi^2}{z(\beta_L - z)}}  \int_{T_\text{gap}}^\infty dh \int_{T_\text{gap}}^A d\hb \, \rho(h,\hb) e^{- \frac{4\pi^2}{\beta_L - z} (h-A) - \frac{4\pi^2}{z} (\hb -A)}\,,\nn \\
	F_\text{non-vac}^{(2)}(\beta_L,J,\alpha) &\colonequals  \int_{C_\alpha} \frac{dz}{2 \pi i} \, e^{z J}  \sqrt{\frac{4\pi^2}{z(\beta_L - z)}}  \int_{T_\text{gap}}^\infty dh \int_{A}^\infty d\hb \, \rho(h,\hb) e^{- \frac{4\pi^2}{\beta_L - z} (h-A) - \frac{4\pi^2}{z} (\hb -A)}\,. \nn
\end{align}
We bound each term separately in the next two lemmas. Proposition \ref{prop:Fnonvaccomplex} is then easily found by combining lemma \ref{lemma:Fnonvac1} and lemma \ref{lemma:Fnonvac2}. (For later convenience we replaced the factors $e^{\frac{9A}{8\beta} (|s| + \pi)^2}\sqrt{\beta^2 +2 (\abs{s}+\pi)^2}$ in the lemmas with $C e^{\frac{2A}{\beta} s^2}$ in the proposition.)

\begin{lemma}\label{lemma:Fnonvac1}
	For sufficiently large $J$,
	\begin{equation}
		\begin{split}
			\left|F_\text{non-vac}^{(1)}\left(\beta+ is, J \right)\right| 
			&\leq C^{(1)}_\beta \, J^{1/4}\,\sqrt{\beta^2 + 2(\abs{s}+\pi)^2}\,
			e^{4\pi \sqrt{AJ} \left(1 - \frac{T_{\rm gap}}{2A} \right) + \frac{9A}{8\beta} (|s|+\pi)^2},
		\end{split}
	\end{equation}
	where $C^{(1)}_\beta$ is independent of $s$ and $J$.
\end{lemma}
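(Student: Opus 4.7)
My plan is to bound $F_\text{non-vac}^{(1)}$ by pulling the absolute value inside all integrals, extracting the worst-case behavior of the $\bar h$-exponential along the $z$-contour, and then using modular invariance a second time on the leftover density integral. This is the ``modular invariance twice'' strategy promised in the introduction.

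Write $z = \alpha + it$ with $\alpha = 2\pi\sqrt{A/J}$ and $\beta_L = \beta + is$. As in \eqref{eq:supexp}, on the interval $\bar h \in [T_{\rm gap}, A]$ one has $|e^{-4\pi^2(\bar h - A)/z}| = \exp[4\pi^2\alpha(A-\bar h)/(\alpha^2+t^2)]$, maximal at $t=0$ and $\bar h = T_{\rm gap}$. Pulling out this worst-case bound and combining it with $|e^{zJ}| = e^{\alpha J}$ produces the desired leading exponential
\begin{equation*}
	\exp\!\left[\alpha J + \frac{4\pi^2(A-T_{\rm gap})}{\alpha}\right] = \exp\!\left[4\pi\sqrt{AJ}\left(1-\tfrac{T_{\rm gap}}{2A}\right)\right],
\end{equation*}
and the $J^{1/4}$ prefactor arises from $|\sqrt{4\pi^2/z}| \leq \sqrt{4\pi^2/\alpha} \propto J^{1/4}$. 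What remains is the density integral $\int dh \int_{T_{\rm gap}}^A d\bar h\, \rho(h,\bar h)\, e^{-\beta^\ast(h-A)}$ with $\beta^\ast := \mathrm{Re}(4\pi^2/(\beta_L-z)) > 0$. Using positivity of $\rho$ together with $e^{-\beta_R^{\rm aux}(\bar h - A)} \geq 1$ on $\bar h \leq A$ for any $\beta_R^{\rm aux} > 0$, I extend $\bar h$ to $\infty$ and obtain a bound by $\tilde Z(\beta^\ast, \beta_R^{\rm aux})$.

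Applying \eqref{ZVir:modinv} now passes to the dual channel. Choosing $\beta_R^{\rm aux}$ as a fixed positive constant, both dual arguments $4\pi^2/\beta^\ast = (\beta-\alpha) + (s-t)^2/(\beta-\alpha)$ and $4\pi^2/\beta_R^{\rm aux}$ are bounded below by positive constants, so the crude bound $\tilde Z(\tilde\beta_L, \tilde\beta_R) \leq C\, e^{A(\tilde\beta_L + \tilde\beta_R)}$ applies, following immediately from the twist gap together with $A > T_{\rm gap}$. The Gaussian factor $\exp[\tfrac{9A}{8\beta}(|s|+\pi)^2]$ then comes out of $\exp[A \cdot 4\pi^2/\beta^\ast]$ after bounding $(s-t)^2 \leq (|s|+\pi)^2$ and $\beta - \alpha \geq 8\beta/9$ for $J$ large. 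The $\sqrt{\beta^2 + 2(|s|+\pi)^2}$ factor comes from the modular prefactor $\sqrt{4\pi^2/\beta^\ast} = |\beta_L-z|/\sqrt{\beta-\alpha}$, worst-cased over $t \in [-\pi,\pi]$.

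The main obstacle is bookkeeping: every estimate must be uniform in $s$, and the crude large-argument bound on $\tilde Z$ must apply on the full relevant region of dual arguments. Since neither dual argument becomes small for $J$ large, this step is routine. Crucially no step invokes a Tauberian theorem or saddle-point analysis; everything follows from elementary pointwise estimates, the twist gap, and modular invariance applied twice.
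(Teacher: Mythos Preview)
Your proposal is correct and follows essentially the same ``modular invariance twice'' strategy as the paper: bound the integrand in absolute value, extract the leading $e^{4\pi\sqrt{AJ}(1-T_{\rm gap}/2A)}$ from the worst-case $\bar h$-exponential along the contour, bound the remaining density integral by the full partition function, and then pass to the dual channel via \eqref{ZVir:modinv} before applying the crude large-argument bound \eqref{partition:boundvac}.

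There is one cosmetic difference worth noting. Where the paper splits the $\bar h$-exponential as $e^{-\frac{4\pi^2}{\alpha}(\bar h-A)} \leq e^{\frac{4\pi^2}{\alpha}(A-T_{\rm gap})}e^{-2\pi(\bar h-A)}$ and thereby lands directly on $\tilde Z(\,\cdot\,,2\pi)$, you instead strip the $\bar h$-exponential completely and reinsert an auxiliary weight $e^{-\beta_R^{\rm aux}(\bar h-A)}\geq 1$ on $\bar h\leq A$ to extend the integral. These are equivalent (choosing $\beta_R^{\rm aux}=2\pi$ reproduces the paper up to a harmless constant), and your version is arguably tidier: by keeping the genuine $t$-dependent $\beta^\ast(t)=\Re\bigl(4\pi^2/(\beta_L-z)\bigr)$ all the way into $\tilde Z(\beta^\ast,\beta_R^{\rm aux})$ and only worst-casing over $t$ \emph{after} the dual-channel crude bound, you sidestep the paper's separate treatment of the $h\lessgtr A$ cases in the sup of $|e^{-4\pi^2(h-A)/(\beta_L-z)}|$. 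A small quibble: the crude bound $\tilde Z(\tilde\beta_L,\tilde\beta_R)\leq C\,e^{A(\tilde\beta_L+\tilde\beta_R)}$ for arguments bounded below follows directly from the convergence assumption on the partition function (as in \eqref{partition:boundvac}), not specifically from ``$A>T_{\rm gap}$''; and your accounting of the polynomial prefactor slightly overcounts (combining the original $|\beta_L-z|^{-1/2}$ with the modular prefactor $|\beta_L-z|/\sqrt{\beta-\alpha}$ actually gives only $\sqrt{|\beta_L-z|/(\beta-\alpha)}$), but this only makes your bound stronger than the stated lemma, so no harm done.
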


\begin{proof}
	In the definition of equation \eqref{eq:Fnonvacsplit} we substitute $z = \alpha + i t$ and bound the integral by the supremum of the absolute value of the integrand, for $t \in [-\pi,\pi]$. To a first approximation this yields:
	\begin{multline}
			\label{Fnonvac1firstbound}
			|F_\text{non-vac}^{(1)}(\beta+ is,J)|\leq \\ e^{\a J}  \sqrt{\frac{4\pi^2}{\alpha(\beta - \alpha)}}\int_{T_\text{gap}}^\infty dh \int_{T_\text{gap}}^A d\hb \, \rho(h,\hb) e^{- \frac{4\pi^2}{\alpha} (\hb -A)} \sup_{t \in [-\pi,\pi]} \left|e^{- \frac{4\pi^2}{\beta + i s - \alpha - i t} (h-A)}\right|
	\end{multline}
	where we used for example that $|z| > \Re(z)$ and also our elementary observation \eqref{eq:supexp}. For the final term we use
	\begin{equation}
		\begin{split}
		&\sup_{t \in [-\pi,\pi]} \left|\exp\p{- \frac{4\pi^2}{\beta + i s - \alpha - i t} (h-A)} \right| =\sup_{t \in [-\pi,\pi]} \exp\p{- \frac{4\pi^2 (\beta - \alpha)}{(\beta - \alpha)^2 + (s+ t)^2} (h-A)} \\
		&\qquad \leq \exp\p{- \frac{4\pi^2 (\beta - \alpha)}{(\beta - \alpha)^2 + (|s|+ \pi)^2} h + \frac{4\pi^2}{\beta - \alpha}A}\\
		&\qquad = \exp\p{- \frac{4\pi^2 (\beta - \alpha)}{(\beta - \alpha)^2 + (|s|+ \pi)^2} (h- A)}\exp\p{ \frac{4\pi^2}{\beta - \alpha}A - \frac{4\pi^2 (\beta - \alpha)}{(\beta - \alpha)^2 + (|s|+ \pi)^2} A}
		\end{split}
	\end{equation}
	where we used \eqref{eq:supexp} again to go to the second line. As will soon become clear, we also need to rewrite the other exponential term in equation \eqref{Fnonvac1firstbound}. We use $T_\text{gap} \leq \hb \leq A$ and $\alpha < 2\pi$ (for sufficiently large $J$) to write:
	\begin{equation}
	\begin{split}
		\exp\p{- \frac{4\pi^2}{\alpha}(\hb - A)} &= \exp\p{- \frac{4\pi^2}{\alpha}(T_\text{gap} - A) - \frac{4\pi^2}{\alpha}(\hb - T_\text{gap})} \\
		&\leq \exp\p{- \frac{4\pi^2}{\alpha}(T_\text{gap} - A) - 2\pi (\hb - T_\text{gap})}\\
		&= \exp\p{- \frac{4\pi^2}{\alpha}(T_\text{gap} - A) - 2\pi (A - T_\text{gap})} \exp\p{- 2\pi (\hb - A) }\,.
	\end{split}
	\end{equation}
	After substitution of both these bounds we can bound the remaining $h$ and $\hb$ integrals by the full Virasoro primary partition function. We find the somewhat messy estimate:
	\begin{multline}
		\label{eq:Fnonvac1messy}
		|F_\text{non-vac}^{(1)}(\beta+ is,J)|\leq\\
		\sqrt{\frac{4\pi^2}{\alpha(\beta - \alpha)}} e^{\alpha J - \left(\frac{4\pi^2}{\alpha} - 2\pi\right) (T_\text{gap} - A)  + \frac{4\pi^2}{\beta - \alpha}A - \frac{4\pi^2 (\beta - \alpha)}{(\beta - \alpha)^2 + (|s|+ \pi)^2} A} \tilde Z \left(\frac{4\pi^2 (\beta - \alpha)}{(\beta - \alpha)^2 + (|s|+ \pi)^2} , 2 \pi \right)\,.
	\end{multline}
	The trick is now to use modular invariance again, leading to the equivalent bound:
	\begin{multline}
		|F_\text{non-vac}^{(1)}(\beta+ is,J)|\leq\\
		\sqrt{\frac{2\pi((\beta - \alpha)^2 + (|s|+ \pi)^2)}{\alpha(\beta - \alpha)^2}} e^{\alpha J - \left(\frac{4\pi^2}{\alpha} - 2\pi\right) (T_\text{gap} - A)  + \frac{4\pi^2}{\beta - \alpha}A - \frac{4\pi^2 (\beta - \alpha)}{(\beta - \alpha)^2 + (|s|+ \pi)^2} A} \tilde Z \left(\frac{(\beta - \alpha)^2 + (|s|+ \pi)^2}{\beta - \alpha} , 2 \pi \right)\,.
	\end{multline}
	Now recall that $\alpha = 2 \pi \sqrt{A/J} \ll \beta$, so for sufficiently large $J$ we can clean up the prefactor to obtain:
	\begin{multline}
		\label{eq:Fnonvac1refined}
		|F_\text{non-vac}^{(1)}(\beta+ is,J)|\leq\\
		C_\beta J^{1/4} \sqrt{\beta^2 + 2 (|s|+ \pi)^2} e^{4 \pi \sqrt{A J} \left( 1 -  \frac{T_\text{gap}}{2 A} \right)} \tilde Z \left(\frac{(\beta - \alpha)^2 + (|s|+ \pi)^2}{\beta - \alpha} , 2 \pi \right)\,.
	\end{multline}

	This leaves us with an estimate for the partition function itself. As follows directly from its definition, it obeys the inequality
	\begin{equation}\label{partition:boundvac}
		\widetilde{Z}(\beta_L, \beta_R) \leqslant 
		\left(1 + e^{-4\pi A} \widetilde{Z}(2\pi, 2\pi) \right) 
		e^{A(\beta_L + \beta_R)} \qquad  (\beta_L, \beta_R \geqslant 2\pi)\,,
	\end{equation}
	which implies that
	\begin{equation}
		\widetilde{Z}\left( \frac{(\beta - \alpha)^2 + (|s| + \pi)^2}{\beta - \alpha},\, 2\pi \right) 
		\leqslant \left(1 + e^{-4\pi A} \widetilde{Z}(2\pi, 2\pi) \right)
		e^{A(\beta + 2\pi) + \frac{9A}{8\beta} (\abs{s}+\pi)^2 }\,,
	\end{equation}
	provided $\alpha \leqslant \beta / 9$, which is certainly the case for sufficiently large $J$. Combining this with equation \eqref{eq:Fnonvac1refined} yields the lemma, with
	\begin{equation}
		C_\beta^{(1)} = C_\beta \left(1 + e^{-4\pi A} \widetilde{Z}(2\pi, 2\pi) \right)
		e^{A(\beta + 2\pi)}\,.
	\end{equation}
\end{proof}

\begin{lemma}\label{lemma:Fnonvac2}
For sufficiently large $J$,
	\begin{equation}
		\begin{split}
			\abs{F_\text{non-vac}^{(2)}\left(\beta+ is,J,2\pi\sqrt{\frac{A}{J}}\right)}&\leqslant C_\beta^{(2)} J^{1/2}\,\sqrt{\b^2+ 2(\abs{s}+\pi)^2}e^{\frac{5\pi}{2}\sqrt{AJ}+\frac{9A}{8\beta}(\abs{s}+\pi)^2 },
		\end{split}
	\end{equation}
	where $C_\beta^{(2)}$ is independent of $s$ and $J$.
\end{lemma}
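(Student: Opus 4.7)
The plan is to parallel the proof of lemma \ref{lemma:Fnonvac1}, the only real modification being the handling of the $\hb$ factor since now $\hb \geq A$ rather than $\hb \leq A$. Starting from \eqref{eq:Fnonvacsplit}, I set $z = \alpha + it$ with $\alpha = 2\pi\sqrt{A/J}$ and bound the $t$-integral by the supremum of the integrand's modulus over $t \in [-\pi,\pi]$. The $h$-dependent exponential is treated exactly as in the previous lemma (splitting $h = (h-A) + A$ and applying \eqref{eq:supexp} to each factor), which gives the bound $\exp\bigl(-\tfrac{4\pi^2(\beta-\alpha)}{(\beta-\alpha)^2+(|s|+\pi)^2}h + \tfrac{4\pi^2 A}{\beta-\alpha}\bigr)$. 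For the $\hb$-dependent exponential, since $\hb - A \geq 0$ I apply the positive-$\mu$ branch of \eqref{eq:supexp} directly to obtain $\exp\bigl(-\tfrac{4\pi^2\alpha}{\alpha^2+\pi^2}(\hb-A)\bigr)$. The remaining factors contribute $|e^{zJ}| = e^{\alpha J}$ and $|\sqrt{4\pi^2/(z(\beta_L-z))}| \leq \sqrt{4\pi^2/(\alpha(\beta-\alpha))}$.

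After extending the $\hb$ integration from $[A,\infty)$ to $[T_{\rm gap},\infty)$ by positivity, the remaining double integral is bounded by $e^{-Aa}\tilde Z(a,b)$ with $a = \tfrac{4\pi^2(\beta-\alpha)}{(\beta-\alpha)^2+(|s|+\pi)^2}$ and $b = \tfrac{4\pi^2\alpha}{\alpha^2+\pi^2}$. The critical issue is that $b \to 0$ as $J \to \infty$, so the direct-channel bound \eqref{partition:boundvac} does not apply. The resolution is to invoke modular invariance \eqref{ZVir:modinv} once more: $\tilde Z(a,b) = \sqrt{4\pi^2/(ab)}\,\tilde Z(4\pi^2/a, 4\pi^2/b)$, with $4\pi^2/a = ((\beta-\alpha)^2+(|s|+\pi)^2)/(\beta-\alpha)$ and $4\pi^2/b = (\alpha^2+\pi^2)/\alpha$. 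By AM-GM, the first satisfies $\geq 2(|s|+\pi) \geq 2\pi$ and the second satisfies $\geq 2\pi$, so \eqref{partition:boundvac} now applies and gives $\tilde Z(4\pi^2/a, 4\pi^2/b) \leq C_\beta \, e^{A(4\pi^2/a + 4\pi^2/b)}$.

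Assembling the exponent and substituting $\alpha = 2\pi\sqrt{A/J}$ one finds
\[
\alpha J + \tfrac{4\pi^2 A}{\beta-\alpha} - Aa + A \cdot \tfrac{4\pi^2}{a} + A \cdot \tfrac{4\pi^2}{b} \;=\; \tfrac{5\pi}{2}\sqrt{AJ} + A\beta + \frac{A(|s|+\pi)^2}{\beta-\alpha} + R(\beta,s,J)\,,
\]
where the $\tfrac{5\pi}{2}\sqrt{AJ}$ comes from $\alpha J + A\pi^2/\alpha = 2\pi\sqrt{AJ} + \tfrac{\pi}{2}\sqrt{AJ}$, and the remainder $R(\beta,s,J) = \tfrac{4\pi^2 A}{\beta-\alpha} - Aa$ is uniformly bounded by $\tfrac{4\pi^2 A}{\beta-\alpha}$ and absorbed into the prefactor. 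For $J$ large enough that $\alpha \leq \beta/9$ one has $1/(\beta-\alpha) \leq 9/(8\beta)$, yielding the stated $\tfrac{9A}{8\beta}(|s|+\pi)^2$ term. The combined prefactor simplifies to $\sqrt{((\beta-\alpha)^2+(|s|+\pi)^2)(\alpha^2+\pi^2)}/(\alpha(\beta-\alpha))$; using $(\beta-\alpha)^2 + (|s|+\pi)^2 \leq \beta^2 + 2(|s|+\pi)^2$, $\alpha^2 + \pi^2 \leq 2\pi^2$, and $1/\alpha \sim \sqrt{J/A}/(2\pi)$, this gives the advertised factor $\sqrt{J}\sqrt{\beta^2 + 2(|s|+\pi)^2}$, with the remaining $\beta$-dependent constants collecting into $C_\beta^{(2)}$.

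The only real subtlety worth double-checking is the verification that both arguments of $\tilde Z$ after the second modular transformation lie in the regime $\beta_L, \beta_R \geq 2\pi$ where \eqref{partition:boundvac} is valid; the AM-GM inequalities above are what make this work uniformly in $s$ and for all sufficiently large $J$. Beyond this, the argument is a direct variant of lemma \ref{lemma:Fnonvac1}: the second use of modular invariance is deployed here not to extract a twist-gap improvement but to trade the dangerously small dual-channel $b$ for a safely large direct-channel $4\pi^2/b \sim \sqrt{J}$, at the price of a subleading $\tfrac{\pi}{2}\sqrt{AJ}$ in the exponent.
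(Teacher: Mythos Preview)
Your proof is correct and follows the paper's own argument essentially step for step: bound the $t$-integral by the supremum, arrive at the same $\tilde Z(a,b)$ with $b=4\pi^2\alpha/(\alpha^2+\pi^2)$, apply modular invariance a second time, and then invoke \eqref{partition:boundvac}. You in fact supply more detail than the paper does (the AM--GM verification that $4\pi^2/a,\,4\pi^2/b\geq 2\pi$ and the explicit assembly of the exponent into $\tfrac{5\pi}{2}\sqrt{AJ}$), whereas the paper simply refers back to the proof of lemma \ref{lemma:Fnonvac1} and notes that the only change is the $\hb\geq A$ range.
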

\begin{proof}
	We recall the definition:
	\begin{equation}
		F_\text{non-vac}^{(2)}(\beta_L,J,\alpha) \colonequals  \int_{C_\alpha} \frac{dz}{2 \pi i} \, e^{z J}  \sqrt{\frac{4\pi^2}{z(\beta_L - z)}}  \int_{T_\text{gap}}^\infty dh \int_{A}^\infty d\hb \, \rho(h,\hb) e^{- \frac{4\pi^2}{\beta_L - z} (h-A) - \frac{4\pi^2}{z} (\hb -A)} \\
	\end{equation}
	With the same steps as in the proof of lemma \ref{lemma:Fnonvac1}, we obtain:
	\begin{equation}
		\label{eq:Fnonvac2messy}
		\begin{split}
			&|F_\text{non-vac}^{(2)}(\beta+ is,J,\alpha)| \\
			&\leq e^{\a J} e^{\frac{4\pi^2}{\b - \a}A- \frac{4\pi^2 (\b - \a)}{(\b - \a)^2 + (|s| +\pi)^2} A } \sqrt{\frac{4\pi^2}{\alpha(\b - \a)}} \tilde Z\left(\frac{4\pi^2 (\b - \a)}{(\b - \a)^2 + (|s| +\pi)^2}, \frac{4\pi^2 \a}{\a^2 + \pi^2}  \right)\\
			&=e^{\a J}  e^{\frac{4\pi^2}{\b - \a}A- \frac{4\pi^2 (\b - \a)}{(\b - \a)^2 + (|s| +\pi)^2} A }\sqrt{\frac{(\b - \a)^2 + (|s| +\pi)^2}{(\b - \a)^2} \,\, \frac{\a^2 + \pi^2}{\a^2}} \tilde Z\left(\frac{(\b - \a)^2 + (|s| +\pi)^2}{(\b - \a)}, \frac{\a^2 + \pi^2}{\a}  \right).
		\end{split}
	\end{equation}
	Compared to lemma \ref{lemma:Fnonvac1}, the only difference here is that the range of $\bar{h}$ is $\bar{h}\geqslant A$, which makes the second argument of the partition function slightly different. The lemma follows after applying \eqref{partition:boundvac} and cleaning up the prefactor, using that $\alpha=2\pi\sqrt{A/J}\leqslant\beta/9$ for large enough $J$.  
\end{proof}

\subsection{Second bound on \texorpdfstring{$F_\text{non-vac}(\beta_L,J)$}{Fnonvac(betaL,J)}}
Below we will need one more bound on the non-vacuum term.

\begin{proposition}
	For sufficiently large $J$,
	\begin{equation}
	\abs{F_\text{non-vac}(\beta + is,J)} \leq \frac{1}{\sqrt{2J}} e^{4 \pi \sqrt{AJ}} C''_\beta
	\end{equation}
	where $C''_\beta$ is independent of $s$ and $J$.
	\end{proposition}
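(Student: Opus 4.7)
The plan is to exploit the initial proposition that $|F(\beta_L, J)| \leq F(\Re(\beta_L), J)$ (the Laplace transform is bounded on vertical lines by its value on the real axis, since $\rho_J$ is a positive measure). This gives an $s$-independent bound on $|F(\beta + is, J)|$ for free, provided we can first bound $F(\beta, J)$ itself at large $J$. The non-vacuum piece is then recovered as the difference $F_{\text{non-vac}}(\beta + is, J) = F(\beta + is, J) - F_{\text{vac}}(\beta + is, J)$, and both pieces on the right are already controlled.

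First I would estimate $F(\beta, J)$ for real $\beta > 0$. Using the split $F = F_{\text{vac}} + F_{\text{non-vac}}$, lemma \ref{lem:Fvaclimit}(b) shows that $F_{\text{vac}}(\beta, J) \sim \frac{1}{\sqrt{2J}} e^{4\pi\sqrt{AJ}} \sqrt{\tfrac{2\pi}{\beta}} e^{\tfrac{4\pi^2 A}{\beta}}(1 - e^{-4\pi^2/\beta})$, so for sufficiently large $J$ it is bounded by a $\beta$-dependent constant times $e^{4\pi\sqrt{AJ}}/\sqrt{2J}$. Proposition \ref{prop:Fnonvaccomplex} at $s = 0$ gives $|F_{\text{non-vac}}(\beta, J)| \leq C_\beta \sqrt{J}\, e^{4\pi\sqrt{AJ}(1 - \tau)}$ with $\tau > 0$, and since $\sqrt{J}\, e^{-4\pi\tau\sqrt{AJ}} = o(1/\sqrt{J})$ at large $J$, this is also $o(e^{4\pi\sqrt{AJ}}/\sqrt{2J})$. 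Adding the two bounds yields, for sufficiently large $J$,
\begin{equation*}
F(\beta, J) \,\leq\, \frac{\widetilde C_\beta}{\sqrt{2J}}\, e^{4\pi\sqrt{AJ}}\,.
\end{equation*}

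Next, the initial proposition applied to $F(\beta + is, J)$ immediately upgrades this to $|F(\beta + is, J)| \leq F(\beta, J) \leq \frac{\widetilde C_\beta}{\sqrt{2J}} e^{4\pi\sqrt{AJ}}$ uniformly in $s$. On the other hand, lemma \ref{lem:Fvaclimit}(a) gives $|F_{\text{vac}}(\beta + is, J)| \leq \frac{C_\beta}{\sqrt{2J}} e^{4\pi\sqrt{AJ}} (1 + |s|)^{-3/2}$, which is certainly bounded by a constant times $\frac{1}{\sqrt{2J}} e^{4\pi\sqrt{AJ}}$. Combining via the triangle inequality,
\begin{equation*}
|F_{\text{non-vac}}(\beta + is, J)| \,\leq\, |F(\beta + is, J)| + |F_{\text{vac}}(\beta + is, J)| \,\leq\, \frac{C''_\beta}{\sqrt{2J}}\, e^{4\pi\sqrt{AJ}}\,,
\end{equation*}
with $C''_\beta$ independent of $s$ and $J$, as claimed.

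I do not anticipate a serious obstacle: the argument is essentially a repackaging of estimates already in hand, and the only subtlety is noticing that positivity of $\rho_J$ on the real $\beta$-axis (via the initial proposition) is exactly what removes the uncontrolled $e^{\tfrac{2A}{\beta} s^2}$ factor that appeared in proposition \ref{prop:Fnonvaccomplex}. The trade-off is that the new bound no longer has the large-$J$ improvement by $e^{-4\pi\tau\sqrt{AJ}}$, so this proposition and proposition \ref{prop:Fnonvaccomplex} will presumably be used in complementary regimes of $|s|$ in the subsequent contour-integral analysis.
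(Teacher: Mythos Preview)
Your proposal is correct and follows essentially the same route as the paper: write $F_{\text{non-vac}} = F - F_{\text{vac}}$, use the positivity bound $|F(\beta+is,J)|\leq F(\beta,J)$ from the initial proposition, split $F(\beta,J)=F_{\text{vac}}(\beta,J)+F_{\text{non-vac}}(\beta,J)$, and bound the pieces via lemma~\ref{lem:Fvaclimit} and proposition~\ref{prop:Fnonvaccomplex} at $s=0$. The only cosmetic difference is that the paper invokes lemma~\ref{lem:Fvaclimit}(a) rather than (b) to bound $F_{\text{vac}}(\beta,J)$, which yields the same conclusion.
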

	\begin{proof}
	We use $F_\text{non-vac}(\beta_L,J) = F(\beta_L,J) - F_\text{vac}(\beta_L,J)$ to write, again for large enough $J$,
	\begin{equation}
		\begin{split}
			|F_\text{non-vac}(\beta + is,J)| &\leq |F(\beta+is,J)| + |F_\text{vac}(\beta + is,J)| \\ 
			&\leq F(\beta,J) + \frac{1}{\sqrt{2J}} e^{4 \pi \sqrt{AJ}} \frac{C_\beta}{(1 + |s|)^{3/2}}\\
			&= F_\text{vac}(\beta,J) + F_\text{non-vac}(\beta,J) + \frac{1}{\sqrt{2J}} e^{4 \pi \sqrt{AJ}} \frac{C_\beta}{(1 + |s|)^{3/2}}\\
			&\leq 2 C_\beta \frac{1}{\sqrt{2J}} e^{4 \pi \sqrt{AJ}} + C_\beta' \sqrt{J} e^{4 \pi \sqrt{AJ}(1 -\tau)}\,.
		\end{split}
	\end{equation}
	where we used lemma \ref{lem:Fvaclimit}(a) to go to the second line and proposition \ref{prop:Fnonvaccomplex} to go to the last line. The proposition follows since $\tau > 0$.
\end{proof}

\section{A theorem for the modular bootstrap}
\label{sec:theorem}
Let us define the spin-$J$ density of the vacuum term as the inverse Laplace transform of $F_\text{vac}(\beta_L,J)$:
\begin{equation}
	\label{rhoJvacdefn}
	2 \rho_{J,\text{vac}}(J + 2h) \colonequals \int \frac{ds}{2\pi} F_\text{vac}(\beta + i s, J) e^{(\beta + is)(h-A)}\,.
\end{equation}
Here we recall the definition of $F_\text{vac}(\beta_L,J)$ in \eqref{F:split}. We stress that this is the \emph{universal} part of the large-spin spectral density. The large $J$ expansion of $\rho_{J,\text{vac}}(J+ 2h)$ is entirely calculable and theory-independent. Our objective in this section is not to calculate it in detail, but rather to put an upper bound on the \emph{theory-dependent} terms that come from $F_\text{non-vac}(\beta_L,J)$.

\begin{remark}
The direct channel expansion of the vacuum in the dual channel is $\rho_{\rm c}(h) \rho_{\rm c}(\hb)$, as we showed in equation \eqref{eq:rho0implicit}. This does not have a decomposition into integer spins, so the ``spin-$J$ projection" of this term does not exist. As a simple computation shows, equation \eqref{rhoJvacdefn} amounts to the specific choice:
\begin{equation}
	\rho_{J,\text{vac}}(J + 2h) \colonequals \rho_{\rm c}(h) \int d\hb \, \rho_{\rm c}(\hb) e^{\alpha (h-\hb+J)} \frac{ \sin (\pi  (J + h-\hb))}{\pi  (J + h-\hb)}\,,
\end{equation}
with, as always, $\alpha = 2 \pi \sqrt{A/J}$.
\end{remark}

In the next proposition we show that the leading large $J$ term of $\rho_{J,\text{vac}}(J + 2h)$ has the expected behavior.

\begin{proposition}
	\label{prop:rhovaclimituniform}
	The limit
	\begin{equation}
		\lim_{J \to \infty} 2 \sqrt{2J} e^{- 4 \pi \sqrt{AJ}} \rho_{J,\rm{vac}}(J + 2h) = \rho_{\rm c}(h)\,,
	\end{equation}
	is uniform for $h$ in any finite interval.
\end{proposition}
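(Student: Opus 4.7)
The strategy is to cast both sides as inverse Fourier integrals along the vertical line $\beta_L = \beta + is$ and then reduce the problem to a single dominated-convergence argument. Define
\[
g_J(s) := \sqrt{2J}\, e^{-4\pi\sqrt{AJ}}\, F_{\rm vac}(\beta+is, J), \qquad g_\infty(s) := \sqrt{\tfrac{2\pi}{\beta+is}}\, e^{\frac{4\pi^2 A}{\beta+is}}\bigl(1-e^{-\frac{4\pi^2}{\beta+is}}\bigr).
\]
The definition \eqref{rhoJvacdefn} then reads
\[
2\sqrt{2J}\,e^{-4\pi\sqrt{AJ}}\,\rho_{J,\rm vac}(J+2h) \;=\; \int \frac{ds}{2\pi}\, g_J(s)\, e^{(\beta+is)(h-A)}.
\]
The defining identity \eqref{eq:rho0implicit} states that $g_\infty$ is exactly the Laplace transform of $\rho_{\rm c}$ shifted by $A$. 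Since $\rho_{\rm c}$ is continuous on $\R$---the apparent $1/\sqrt{h-A}$ singularity at $h=A$ cancels because the two $\cosh$ terms agree at first order in $h-A$, giving $\rho_{\rm c}(h) \sim 8\sqrt{2}\pi^2\sqrt{h-A}$---and since $g_\infty(s) = O(|s|^{-3/2})$ at infinity, Bromwich inversion yields the analogous representation $\rho_{\rm c}(h) = \int \tfrac{ds}{2\pi}\, g_\infty(s)\, e^{(\beta+is)(h-A)}$.

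Subtracting the two representations and bounding the modulus of $e^{is(h-A)}$ by $1$ gives
\[
\bigl|2\sqrt{2J}\,e^{-4\pi\sqrt{AJ}}\,\rho_{J,\rm vac}(J+2h) - \rho_{\rm c}(h)\bigr| \;\leq\; e^{\beta(h-A)} \int \frac{ds}{2\pi}\, \bigl|g_J(s) - g_\infty(s)\bigr|.
\]
By Lemma~\ref{lem:Fvaclimit}(b) one has $g_J(s) \to g_\infty(s)$ pointwise in $s$. By Lemma~\ref{lem:Fvaclimit}(a), $|g_J(s)| \leq C_\beta(1+|s|)^{-3/2}$ for all sufficiently large $J$, and a direct estimate gives the same bound on $g_\infty$ itself. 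Hence $|g_J - g_\infty|$ is dominated by a fixed $s$-integrable function, and dominated convergence implies $\int |g_J - g_\infty|\, ds \to 0$ as $J \to \infty$.

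Finally, the factor $e^{\beta(h-A)}$ is bounded on any finite interval of $h$, so the estimate above is uniform in $h$ on such an interval. The main technical input has already been supplied by Lemma~\ref{lem:Fvaclimit}; the rest of the argument is routine. I do not expect a serious obstacle: no uniform-in-$s$ rate for the convergence $g_J \to g_\infty$ is needed, since the $h$-dependence cleanly factorizes out of the $s$-integral through the bounded multiplier $e^{\beta(h-A)}$, and dominated convergence handles the remainder.
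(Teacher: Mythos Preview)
Your proposal is correct and follows essentially the same route as the paper: write both sides as inverse Fourier integrals along $\beta_L = \beta + is$, subtract, pull out the $h$-dependence via $|e^{is(h-A)}|=1$, and apply dominated convergence using Lemma~\ref{lem:Fvaclimit}(a) as the majorant and Lemma~\ref{lem:Fvaclimit}(b) for the pointwise limit. The only difference is cosmetic: the paper keeps the factor $e^{-\beta(h-A)}$ on the left-hand side of the inequality rather than moving $e^{\beta(h-A)}$ to the right, and it does not pause to justify the Bromwich inversion for $\rho_{\rm c}$ as carefully as you do.
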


\begin{remark}
The formatting used in the rest of the paper would have suggested the notation
\begin{equation}
	2 \rho_{J,\rm{vac}}(J + 2h) \underset{J \to \infty}{\sim} \frac{1}{\sqrt{2J}} e^{ 4\pi \sqrt{AJ}} \rho_{\rm c}(h)\,,
\end{equation}
but here this is not quite possible: our definition that $a \sim b$ equals $\lim a/b = 1$ does not work for $h \leq A$ since in that region $\rho_{\rm c}(h) = 0$.
\end{remark}

\begin{proof}
	Fix some $\beta > 0$. The definitions \eqref{rhoJvacdefn} and \eqref{eq:rho0implicit} inform us that:
	\begin{equation}
		\begin{split}
			2\rho_{J,\text{vac}}(2h + J)e^{-\beta(h-A)} &= \int\frac{ds}{2\pi}F_\text{vac}\left(\beta+is,J,2\pi\sqrt{\frac{A}{J}}\right) e^{is(h-A)}\\
			\rho_{\rm c}(h) e^{-\beta(h-A)} &= \int \frac{ds}{2\pi}  \sqrt{\frac{2\pi}{\beta + i s}}e^{\frac{4\pi^2 A}{\beta + i s}} \p{1 - e^{-\frac{4\pi^2}{\beta + i s}}} e^{is(h-A)}\,.
		\end{split}
	\end{equation}
	Therefore,
	\begin{equation}
		\begin{split}
			&\left| 2\sqrt{2J}e^{-4\pi\sqrt{AJ}}\rho_{J,\text{vac}}(2h + J) - \rho_{\rm c}(h)\right| e^{-\beta(h-A)} \\
			&\leq \int\frac{ds}{2\pi} \left| \sqrt{2J}e^{-4\pi\sqrt{AJ}} F_\text{vac}\left(\beta+is,J,2\pi\sqrt{\frac{A}{J}}\right) - \sqrt{\frac{2\pi}{\beta + i s}}e^{\frac{4\pi^2 A}{\beta + i s}} \p{1 - e^{-\frac{4\pi^2}{\beta + i s}}}\right|\,,
		\end{split}
	\end{equation}
	where we made the right-hand side independent of $h$ by using $|e^{i s(h-A)}| = 1$. At large $J$ lemma \ref{lem:Fvaclimit}(b) says that the integrand vanishes pointwise in $s$. But this suffices to prove that the integral also vanishes, because lemma \ref{lem:Fvaclimit}(a) and the dominated convergence theorem tell us that we can swap the large $J$ limit and the integral over $s$. The only residual non-uniformity in $h$ is due to the additional factor $e^{-\beta(h-A)}$ on the left-hand side, but this is bounded if $h$ is restricted to any finite interval.
\end{proof}

Now we discuss the test functions with which we can average the spectral density $\rho_J$. We denote as $\mathcal{D}([-R,R])$ the space of functions $\varphi(h)$ which are smooth and compactly supported in the interval $[-R,R]$.\footnote{An example of such function is $\varphi(h)=\exp\left(\frac{1}{h^2-R^2}\right)$ if $|h|<R$ and $0$ otherwise.} The compact support of $\varphi$ implies that its Fourier transform
\begin{equation}
	\hat \varphi(s) = \int dh \, \varphi(h) e^{i s h}
\end{equation}
is an entire function of $s\in\mathbb{C}$.

For $h_* > 0$ we define the rescaled and translated function:
\begin{equation}\label{def:phirescale}
	\begin{split}
		\varphi_{h_*,\lambda}(h):=\frac{1}{\lambda}\varphi\p{\frac{h-h_*}{\lambda}},\quad\lambda>0\,,
	\end{split}
\end{equation}
which is smooth and supported in the interval $[h_*-\lambda R,h_*+\lambda R]$. Note that
\begin{equation}\label{FT:rescaledphi}
	\begin{split}
		\hat{\varphi}_{h_*,\lambda}(s)=e^{ish_*}\hat{\varphi}(\lambda s)\,,
	\end{split}
\end{equation}
again for $s \in \C$. The most precise limit is when $\lambda \to 0$, we essentially obtain $\hat{\varphi}(0)$ times a delta function $\delta(h-h_*)$. The following theorem tells us that we can send $\lambda \to 0$ almost as fast as $J^{-1/4}$ as $J \to \infty$.

\begin{theorem}\label{theorem:error}
	Suppose $\varphi \in \mathcal{D}([-R,R])$ and pick $\lambda_J$ such that
	\begin{equation}
		\lambda_J \geq \frac{\delta}{J^{\frac{1}{4} - \epsilon} }\,,
	\end{equation}
	for some fixed $\epsilon,\delta > 0$ and for sufficiently large $J$. Then for any $p\in\mathbb{N}$
	\begin{equation}
		\label{eq:mainthm}
		\int_{T_{\rm{gap}}}^\infty dh\, \varphi_{h_*,\lambda_J}(h) \left[\rho_J(J + 2h)-\rho_{J,\rm{vac}}(J + 2h)\right] = \frac{1}{\sqrt{2J}} e^{4\pi\sqrt{AJ}} o\left(J^{-p}\right)\,,
	\end{equation}
	as $J \to \infty$. Here, $\varphi_{h_*,\lambda}$ is defined in \eqref{def:phirescale}, $\rho_J$ is the spectral density, and $\rho_{J,\rm{vac}}$ is defined in \eqref{rhoJvacdefn}. The right-hand side is uniform for $h_*$ in any bounded interval.
\end{theorem}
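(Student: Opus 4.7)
\emph{Overview.} The plan is to express the left-hand side of \eqref{eq:mainthm} as a Fourier-type integral of $F_\text{non-vac}(\beta+is, J)$ weighted against $\hat\varphi(\lambda_J(s-i\beta))$, and then split the $s$-integration at a judiciously chosen cutoff $s_* \sim J^{1/4}$ so that the two complementary bounds on $F_\text{non-vac}$ (Proposition \ref{prop:Fnonvaccomplex} and the subsequent second bound) can each be used where they are sharpest.

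\emph{Setup.} For any fixed $\beta > 0$, the definitions of $F_\text{non-vac}$ in \eqref{F:split} and $\rho_{J,\text{vac}}$ in \eqref{rhoJvacdefn} give, by inverse Laplace transform,
\[
2\bigl[\rho_J(J+2h)-\rho_{J,\text{vac}}(J+2h)\bigr] = \int\frac{ds}{2\pi}\, F_\text{non-vac}(\beta+is, J)\, e^{(\beta+is)(h-A)}.
\]
Multiplying by $\varphi_{h_*,\lambda_J}(h)$, integrating in $h$, swapping the two integrations (legitimate by either of the bounds on $F_\text{non-vac}$), and using the analytic extension \eqref{FT:rescaledphi} evaluated at $s-i\beta$ yields
\[
\text{LHS of \eqref{eq:mainthm}} \;=\; \frac{e^{\beta(h_*-A)}}{2}\int\frac{ds}{2\pi}\, F_\text{non-vac}(\beta+is, J)\, e^{is(h_*-A)}\, \hat\varphi\bigl(\lambda_J(s-i\beta)\bigr).
\]
To streamline the bookkeeping I will assume $\lambda_J$ remains bounded above, so that $e^{R\lambda_J\beta} \leq M$ uniformly; the regime $\lambda_J\to\infty$ requires only cosmetic changes and is in fact easier.

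\emph{Two-region estimate.} Fix $\eta > 0$ small enough that $4\pi\sqrt{A}\,\tau > 2A\eta^2/\beta$ (with $\tau$ from Proposition \ref{prop:Fnonvaccomplex}), set $s_* \colonequals \eta\, J^{1/4}$, and split at $|s| = s_*$. For $|s|\leq s_*$, Proposition \ref{prop:Fnonvaccomplex} together with the crude bound $|\hat\varphi(\lambda_J(s-i\beta))| \leq M\,\|\varphi\|_{L^1}$ produces a contribution of at most
\[
C\, s_* \sqrt{J}\, e^{4\pi\sqrt{AJ}(1-\tau)+2A\eta^2\sqrt{J}/\beta} \;=\; \frac{e^{4\pi\sqrt{AJ}}}{\sqrt{2J}}\cdot C'\, J^{5/4}\, e^{-c\sqrt{J}}
\]
for some $c > 0$, which is manifestly $\frac{e^{4\pi\sqrt{AJ}}}{\sqrt{2J}}\cdot o(J^{-p})$ for every $p$. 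For $|s|>s_*$, I use the second bound of subsection \ref{subsec:Fnonvacestimates}, $|F_\text{non-vac}| \leq C''_\beta\, e^{4\pi\sqrt{AJ}}/\sqrt{2J}$, together with the Paley-Wiener estimate: since $\varphi \in \mathcal{D}([-R,R])$, for each $N\in\mathbb{N}$ there exists $C_N$ with $|\hat\varphi(\zeta)| \leq C_N(1+|\zeta|)^{-N}\, e^{R|\Im\zeta|}$. The change of variable $u = \lambda_J s$ then gives
\[
\int_{|s|>s_*} ds\,(1+\lambda_J|s|)^{-N} \;\leq\; C\,\lambda_J^{-1}(\lambda_J s_*)^{-(N-1)},
\]
and since the hypothesis ensures $\lambda_J s_* \geq \delta\eta\, J^\epsilon\to\infty$, this yields a contribution bounded by $\frac{e^{4\pi\sqrt{AJ}}}{\sqrt{2J}}\cdot O(J^{1/4-\epsilon N})$, which is $o(J^{-p})$ for any prescribed $p$ by taking $N$ large.

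\emph{Main subtlety and uniformity.} The delicate point is the compatibility of the two $F_\text{non-vac}$ estimates: the $e^{2As^2/\beta}$ growth in Proposition \ref{prop:Fnonvaccomplex} forces the cutoff $s_*$ to scale no faster than $J^{1/4}$ (any faster choice would swamp the $J$-suppression of the factor $e^{-4\pi\sqrt{AJ}\,\tau}$), and the Paley-Wiener tail in the outer region then beats every polynomial in $J$ only if $\lambda_J s_* \to\infty$, i.e.\ $\lambda_J \gg J^{-1/4}$ — precisely the hypothesis of the theorem. Uniformity of the estimate for $h_*$ in any bounded interval is automatic, since the only $h_*$-dependent factor in the final bound is the prefactor $e^{\beta(h_*-A)}$.
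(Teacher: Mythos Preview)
Your proposal is correct and follows essentially the same route as the paper's proof: express the left-hand side as a Fourier-type integral of $F_\text{non-vac}(\beta+is,J)$ against $\hat\varphi_{h_*,\lambda_J}(s-i\beta)$, split the $s$-integration at a scale $s_*\sim J^{1/4}$, apply Proposition~\ref{prop:Fnonvaccomplex} on the inner region and the uniform bound on the outer region, and use the Paley--Wiener decay of $\hat\varphi$ to kill the tail. The paper chooses $s_*$ as the exact crossover point of the two bounds and splits at $s_*/2$, whereas you take $s_*=\eta J^{1/4}$ with $\eta$ small enough that $4\pi\sqrt{A}\,\tau>2A\eta^2/\beta$; both choices work for the same reason and yield the same $o(J^{-p})$ conclusion, and your identification of the uniformity in $h_*$ via the sole prefactor $e^{\beta(h_*-A)}$ matches the paper exactly.

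One small caveat: your remark that the unbounded-$\lambda_J$ case ``requires only cosmetic changes and is in fact easier'' is not quite right as stated, since the Paley--Wiener factor $e^{R\lambda_J\beta}$ would then grow without bound and is not beaten by the polynomial decay $(\lambda_J s_*)^{-(N-1)}$. The paper's proof has the same implicit restriction (its ``strictly decreasing in $\lambda$'' claim ignores the $e^{\lambda\beta}$ prefactor), and since all applications use bounded $\lambda_J$ this does not affect the substance of either argument; but you should either drop that parenthetical or note that for large $\lambda_J$ one can simply take $\beta\to 0$ along with $\lambda_J\to\infty$ to control the offending factor.
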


This theorem essentially states that the \emph{relative} error decays faster than any power law. Indeed, the contribution from $\rho_{J,\text{vac}}(J + 2h)$ behaves like $e^{4\pi \sqrt{AJ}}/ \sqrt{2J}$ by the previous proposition, although the prefactor is non-zero only if $h_* \geq A$. 

\begin{proof}
	It suffices to prove the case $\varphi \in \mathcal{D}([-1,1])$, since the general case $\varphi \in \mathcal{D}([-R,R])$ can be reduced to this by a simple substitution: set $\tilde{\varphi} := \varphi_{0,R^{-1}}$ and replace $\delta \rightarrow R\delta$. One can verify that 
	\begin{equation}
		\begin{split}
			\tilde{\varphi}_{h_*, R\lambda_J}(h) = \varphi_{h_*, \lambda_J}(h)\,.
		\end{split}
	\end{equation}
	Therefore, in the proof below, we restrict to the case $\varphi \in \mathcal{D}([-1,1])$ without loss of generality.
		
	At this stage we can completely forget about the partition function. We simply use that, for arbitrary $\beta > 0$,
	\begin{equation}
		\label{eq:rhoagainsttest}
		\begin{split}
		\int_{T_\text{gap}}^\infty dh \, \varphi(h) \rho_J(J + 2h) &= \int \frac{ds}{2\pi} \left[\int dh\, \varphi(h) e^{(\beta + i s)(h-A)} \right]F(\beta + is, J)\,,
		\end{split}
	\end{equation}
	and so the left-hand side of equation \eqref{eq:mainthm} is given by
	\begin{equation}
		\label{integral:phinonvac}
		\int \frac{ds}{2\pi} \hat \varphi_{h_*,\lambda}(s - i \beta) e^{- A (\beta+ i s)} F_\text{non-vac}(\beta + i s, J)\,.
	\end{equation}
	Proposition \ref{prop:Fnonvaccomplex} provides the bounds:
	\begin{equation}
		\label{Fnonvacboundssummary}
		\begin{split}
			\abs{F_{\rm{non-vac}}\left(\beta+ is,J\right)}\leqslant 
			&\,\sqrt{J}\,e^{4\pi \sqrt{AJ} \left(1 - \tau\right)} C'_\beta e^{\frac{2A}{\beta} s^2},\\
			\abs{F_{\rm{non-vac}}\left(\beta+ is,J\right)}
			\leq &\,\frac{1}{\sqrt{2J}} e^{4 \pi \sqrt{AJ}} C''_\beta
		\end{split}
	\end{equation}
	where $C_\beta', C_\beta''$ are independent of $s$ and $J$, and $\tau = \min \{3/8, T_\text{gap}/(2A)\}$. Notice also that for any $N\in\mathbb{N}$, $\hat{\varphi}(s-i\beta)$ has the following upper bound \cite{hormander2015analysis}
	\begin{equation}\label{varphi:FTbound}
		|\hat \varphi(s - i \beta)| \leq \frac{B_N e^{\beta}}{(1 + |s|)^N}\,,\quad{\rm for\ }s\in\mathbb{R}\ {\rm and\ }\beta\geqslant0.
	\end{equation}
	Here, $B_N$ is finite and depends only on $\varphi$ and $N$.\footnote{For any $\varphi\in\mathcal{D}([-1,1])$ we have, by integration-by-part $N$ times, 
	$$\abs{(s-i\beta)^N\hat{\varphi}(s-i\beta)}\equiv\abs{\int_{-1}^{1} dh \, \left[\partial^N\varphi(h)\right] e^{(\beta+i s) h}}\leqslant \abs{\abs{\partial^N\varphi}}_{L^1}e^{\abs{\beta}}.$$
	So $\abs{\hat{\varphi}(s-i\beta)}\leqslant \abs{s}^{-N}\abs{\abs{\partial^N\varphi}}_{L^1}e^{\abs{\beta}}$ for any $N\in\mathbb{N}$. The bound \eqref{varphi:FTbound} then follows by taking $B_N=2^N\max\left\{\abs{\abs{\varphi}}_{L^1},\,\abs{\abs{\partial^N\varphi}}_{L^1}\right\}$.} Then by \eqref{FT:rescaledphi}, we have
	\begin{equation}\label{bound:FTrescaledphi}
		\begin{split}
			\abs{\hat{\varphi}_{h_*,\lambda}(s-i\beta)}\leqslant \frac{B_N e^{(\lambda+h_*)\beta}}{(1+\lambda \abs{s})^{N}}\,.
		\end{split}
	\end{equation}
	Notice that this is where the dependence on $h_*$ enters the proof.

	We now split the integral in \eqref{integral:phinonvac} into two parts:
	\begin{equation}
		\begin{split}
			\int_{-\infty}^{+\infty}\frac{ds}{2\pi}=\int_{\abs{s}\leqslant \frac{1}{2}s_*}\frac{ds}{2\pi}+\int_{\abs{s}\geqslant \frac{1}{2}s_*}\frac{ds}{2\pi}
		\end{split}
	\end{equation}
	where $s_*$ is the transition point between the two bounds in equation \eqref{Fnonvacboundssummary}, which is given by
	\begin{equation}
		\begin{split}
			-4\pi\tau\sqrt{AJ}+\frac{2As_*^2}{\beta}+\log J=0\quad\Rightarrow\quad s_*=\sqrt{\frac{2\pi\beta\tau}{\sqrt{A}}}J^{1/4}\left(1-\frac{\log J}{4\pi\tau\sqrt{AJ}}\right)^{1/2}.
		\end{split}
	\end{equation}
	The integral over $\abs{s}\leqslant\frac{1}{2}s_*$ is bounded by
	\begin{equation}
		\begin{split}
			&B_N e^{(\lambda+h_*)\beta}C_\beta' J^{1/2}\,e^{4\pi\sqrt{AJ}(1-\tau)}\int_{\abs{s}\leqslant \frac{1}{2}s_*}\frac{ds}{2\pi}e^{\frac{2As^2}{\beta}} \\
			&\leqslant B_N e^{(\lambda+h_*)\beta}C_\beta' J^{1/2}\,e^{4\pi\sqrt{AJ}(1-\tau)}\left[ \int_{1\leqslant\abs{s}\leqslant\frac{1}{2}s_*}\frac{ds}{2\pi}\,s\,e^{\frac{2As^2}{\beta}}+	\int_{\abs{s}\leqslant1}\frac{ds}{2\pi}\,e^{\frac{2As^2}{\beta}}\right] \\
			&\leqslant \frac{1}{\pi}B_N e^{(\lambda+h_*)\beta}C_\beta' J^{1/2}\,e^{4\pi\sqrt{AJ}(1-\tau)}\left[\frac{\beta}{4A}e^{\frac{As_*^2}{2\beta}}+e^{\frac{2A}{\beta}}\left(1-\frac{\beta}{4A}\right)\right] \\
			&=O\left( J^{1/4}\,e^{4\pi\sqrt{AJ}(1-3\tau/4)+h_*\beta}\right).
		\end{split}
	\end{equation}
	The integral over $\abs{s}\geqslant \frac{1}{2}s_*$ is bounded by
	\begin{equation}
		\begin{split}
			&C''_\beta \frac{1}{\sqrt{2J}}\,e^{4\pi\sqrt{AJ}}\int_{\abs{s}\geqslant \frac{1}{2}s_*}\frac{ds}{2\pi} \frac{B_N\,  e^{(\lambda+h_*)\beta}}{(1+\lambda \abs{s})^{N}} \\
			&= B_N e^{(\lambda+h_*)\beta}\,C''_\beta \frac{1}{\sqrt{2J}}\,e^{4\pi\sqrt{AJ}} \frac{1}{\pi}\frac{1}{\lambda(N-1)}(1+\lambda s_*/2)^{-N+1}\\
			&=O\left( \frac{J^{-1/2} e^{4\pi\sqrt{AJ}+h_*\beta}}{\lambda \left(1+\lambda J^{1/4}\right)^{N-1}}\right) \\
		\end{split}
	\end{equation}
	This expression is strictly decreasing in $\lambda$ so the worst behavior occurs when $\lambda$ saturates the bound stated in the theorem, that is $\lambda =\delta/ J^{\frac{1}{4}  -\epsilon}$. In that case we get for the two parts of the integral:
	\begin{equation}\label{error:nonvac}
		\begin{split}
			\frac{1}{\sqrt{2J}}e^{4\pi\sqrt{AJ}}\times\begin{cases}
				O(J^{3/4}e^{-3\pi\tau\sqrt{AJ}+h_*\beta}), & \abs{s}\leqslant\frac{1}{2}s_*\,, \\
				O\left(J^{\frac{1}{4}-N\epsilon}e^{h_*\beta}\right), &\abs{s}\geqslant \frac{1}{2}s_*\,.
			\end{cases}
		\end{split}
	\end{equation}
	Now for any $\epsilon>0$ and $p\in\mathbb{N}$, we can choose sufficiently large $N$ such that $N\epsilon-1/4>p$, then the main statement follows, with our estimate for the subleading term coming from the part from $|s| \geq s_* / 2$.
	
	We finally note that our error estimates depend on $h_*$ only through the factors $e^{\beta h_*}$, which is uniformly bounded when $h_*$ lies in any finite interval.
\end{proof}
\begin{remark}
	The main statement of theorem \ref{theorem:error} also holds for a broader class of test functions, for example Gaussians like $\varphi(h) = e^{-a h^2}$. This is because the proof only relies on an upper bound for the Fourier transform of $\varphi$, namely equation \eqref{varphi:FTbound}, valid for $\beta$ in a finite interval. In fact, for Gaussian-type test functions, one can even obtain sharper bounds on the error term, which might be useful for other applications. In this work, however, we focus on compactly supported test functions, as our goal is to count states within a finite interval.
\end{remark}

\subsection{Consequences}
Theorem \ref{theorem:error} in particular tells us that when we smear the spin-$J$ spectral density of Virasoro primary states against a compactly supported smooth test function, its $1/J$ expansion is universal to all orders in $J$ and given by the vacuum term. However, we would like to remind the readers that it does not mean that the number of spin-$J$ Virasoro primary states in a finite twist window has a universal $1/J$ expansion. This is because the actual test function used for counting states is the indicator function which is not smooth. Nevertheless, theorem \ref{theorem:error} already has interesting physical implications, which we will describe below.

The first consequence of theorem \ref{theorem:error} is that the Cardy formula is valid when the size of the twist interval decays to zero as long as the process is slow enough:
\begin{corollary}\label{corr:count}
	Let $\delta>0$ and $\gamma\in[0,1/4)$ be fixed, let $\mathcal{N}(h_*,\delta,\gamma,J)$ be the number of spin-$J$ Virasoro primary states with $h\in(h_*-\delta J^{-\gamma},h_*+\delta J^{-\gamma})$. Then we have
	\begin{equation}
		\begin{split}
			\lim\limits_{J\rightarrow\infty}\left[\sqrt{2}J^{\frac{1}{2}+\gamma}e^{-4\pi\sqrt{AJ}}\mathcal{N}(h_*,\delta,\gamma,J)\right]=\begin{cases}
				\displaystyle\int_{h_*-\delta}^{h_*+\delta} dh\, \rho_{\rm c}(h), & \gamma = 0 , \\[2.5ex]
				\displaystyle  2\delta\rho_{\rm c}(h_*) & 0<\gamma<1/4. \\
			\end{cases}
		\end{split}
	\end{equation}
\end{corollary}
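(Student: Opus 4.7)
The plan is to deduce the corollary from Theorem \ref{theorem:error} and Proposition \ref{prop:rhovaclimituniform} via a sandwich argument, since $\mathcal{N}$ amounts to integrating $\rho_J$ against an indicator, which does not lie in $\mathcal{D}(\mathbb{R})$. First I would write
$$\mathcal{N}(h_*,\delta,\gamma,J) = 2\int dh\,\mathbf{1}_{I_J}(h)\,\rho_J(J+2h),\qquad I_J = (h_*-\delta J^{-\gamma},\,h_*+\delta J^{-\gamma}),$$
and for any $\eta>0$ pick smooth bumps $\chi^\pm\in\mathcal{D}(\mathbb{R})$ satisfying $\chi^-\leqslant\mathbf{1}_{(-\delta,\delta)}\leqslant\chi^+$ and $\int(\chi^+-\chi^-)\,du\leqslant\eta$. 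Setting $\lambda_J := J^{-\gamma}$ and noting $\chi^\pm((h-h_*)/\lambda_J) = \lambda_J\,\chi^\pm_{h_*,\lambda_J}(h)$ in the notation of \eqref{def:phirescale}, the pointwise sandwich becomes
$$2\lambda_J\int dh\,\chi^-_{h_*,\lambda_J}(h)\,\rho_J(J+2h)\;\leqslant\;\mathcal{N}(h_*,\delta,\gamma,J)\;\leqslant\;2\lambda_J\int dh\,\chi^+_{h_*,\lambda_J}(h)\,\rho_J(J+2h).$$

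Since $\gamma\in[0,1/4)$, the condition $\lambda_J\geqslant\delta'J^{-1/4+\epsilon}$ of Theorem \ref{theorem:error} is satisfied for any $\epsilon\in(0,\frac{1}{4}-\gamma)$ and some constant $\delta'>0$. I can therefore substitute $\rho_J\to\rho_{J,\mathrm{vac}}$ in both bounds with an error of size $\frac{\lambda_J}{\sqrt{2J}}e^{4\pi\sqrt{AJ}}\,o(J^{-p})$ for arbitrary $p\in\mathbb{N}$. Multiplying through by $\sqrt{2}\,J^{1/2+\gamma}e^{-4\pi\sqrt{AJ}}$ and using $J^\gamma\lambda_J=1$ turns this error into $o(J^{-p})$, which vanishes. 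Next, Proposition \ref{prop:rhovaclimituniform} provides the convergence $2\sqrt{2J}\,e^{-4\pi\sqrt{AJ}}\rho_{J,\mathrm{vac}}(J+2h)\to\rho_{\rm c}(h)$, uniform for $h$ in any bounded interval. The supports of $\chi^\pm_{h_*,\lambda_J}$ all lie in a fixed neighborhood $[h_*-R,h_*+R]$ for large $J$, so applying this uniform convergence and changing variables $u=(h-h_*)/\lambda_J$ gives
$$\sqrt{2}\,J^{1/2+\gamma}e^{-4\pi\sqrt{AJ}}\cdot 2\lambda_J\int dh\,\chi^\pm_{h_*,\lambda_J}(h)\,\rho_{J,\mathrm{vac}}(J+2h) \;=\; \int du\,\chi^\pm(u)\,\rho_{\rm c}(h_*+\lambda_J u)+o(1).$$

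For $\gamma>0$ one has $\lambda_J\to 0$, and dominated convergence (using compact support of $\chi^\pm$ and continuity of $\rho_{\rm c}$ at $h_*$) shows the main term converges to $\rho_{\rm c}(h_*)\int\chi^\pm(u)\,du$, which tends to $2\delta\rho_{\rm c}(h_*)$ as $\eta\to 0$. For $\gamma=0$ the main term equals $\int\chi^\pm(u)\rho_{\rm c}(h_*+u)\,du$, and local boundedness of $\rho_{\rm c}$ shows this tends to $\int_{h_*-\delta}^{h_*+\delta}\rho_{\rm c}(h)\,dh$ as $\eta\to 0$. Combining the sandwich with these limits and sending $\eta\to 0$ pins both $\liminf$ and $\limsup$ of $\sqrt{2}\,J^{1/2+\gamma}e^{-4\pi\sqrt{AJ}}\mathcal{N}$ to the claimed common value.

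The only technical concern is that the $o(J^{-p})$ non-vacuum error from Theorem \ref{theorem:error} must beat the prefactor $\lambda_J^{-1}=J^\gamma$ introduced by the sandwich rescaling; this is automatic since $p$ is arbitrary in the theorem statement. All other ingredients — continuity of $\rho_{\rm c}$, uniform convergence of the vacuum density, and smooth-bump approximation of indicators — are routine.
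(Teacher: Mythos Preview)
Your proof is correct and follows essentially the same approach as the paper: a sandwich argument with smooth bumps $\chi^\pm$ bracketing the indicator, application of Theorem~\ref{theorem:error} (with $\lambda_J=J^{-\gamma}$) and Proposition~\ref{prop:rhovaclimituniform} to handle each side, and a final optimization over the approximating bumps ($\eta\to 0$ for you, infimum/supremum over $\varphi^\pm$ in the paper). The only cosmetic difference is that the paper normalizes its bumps around $\theta_{(-1,1)}$ and scales by $\delta J^{-\gamma}$, whereas you build them around $\mathbf{1}_{(-\delta,\delta)}$ and scale by $J^{-\gamma}$, which is an equivalent parametrization.
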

\begin{proof}
	Consider two types of compactly supported smooth test functions $\varphi^\pm$ satisfying
	\begin{equation}\label{def:phipm}
		\begin{split}
			\varphi^-(x)<\theta_{(-1,1)}(x)<\varphi^+(x),
		\end{split}
	\end{equation}
	where $\theta_{(-1,1)}(x)$ is the indicator function of $(-1,1)$. See figure \ref{fig:phipm} for an example.
	\begin{figure}[t]
		\centering
		\includegraphics[width=0.7\textwidth]{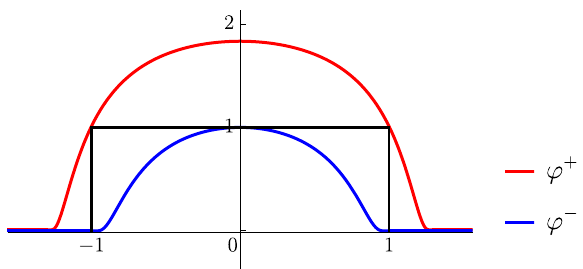} 
		\caption{An example of compactly supported smooth functions $\varphi^+$ (red) and $\varphi^-$ (blue). They are constructed from bump functions of the form $\varphi^+(x)=\exp\left(-\frac{b_+}{a_+^2 - x^2} + \frac{b_+}{a_+^2 - 1}\right)$ (for $\abs{x}\leqslant a_+$) and $\varphi^-(x)=\exp\left(-\frac{b_-}{a_-^2 - x^2} + \frac{b_-}{a_-^2}\right)$ (for $\abs{x}\leqslant a_-$), with appropriately chosen parameters.}
		\label{fig:phipm}
	\end{figure}
	Define
	\begin{equation}
	\begin{split}
		\varphi^{\pm}_{h_*,\delta J^{-\gamma}}(h):=\delta^{-1}J^{\gamma}\varphi^{\pm}(\delta^{-1}J^{\gamma}(h-h_*))\,
	\end{split}
\end{equation}
	according to \eqref{def:phirescale}.
	
	Since the spectral density is positive, we have
	\begin{equation}
		\begin{split}
			2\delta J^{-\gamma}\int dh\,\rho_J(J+2h)\,\varphi_{h_*,\delta J^{-\gamma}}^-(h)\leqslant\mathcal{N}(h_*,\delta,\gamma,J)\leqslant 2\delta J^{-\gamma}\int dh\,\rho_J(J+2h)\,\varphi_{h_*,\delta J^{-\gamma}}^+(h).
		\end{split}
	\end{equation}
	Let us first focus on the upper bound. By theorem \ref{theorem:error}  and proposition \ref{prop:rhovaclimituniform}, we obtain
	\begin{equation}\label{limsupN}
		\begin{split}
			\limsup\limits_{J\rightarrow\infty}\left[\sqrt{2}J^{\frac{1}{2}+\gamma}e^{-4\pi\sqrt{AJ}}\mathcal{N}(h_*,\delta,\gamma,J)\right]\leqslant \begin{cases}
				\displaystyle\delta\int_{h_*-\delta}^{h_*+\delta} dh\, \rho_{\rm c}(h)\,\varphi^+_{h_*,\delta}(h), & \gamma = 0 , \\[2.5ex]
				\displaystyle  \delta\rho_{\rm c}(h_*)\int dx\,\varphi^+(x) & 0<\gamma<1/4. \\
			\end{cases}
		\end{split}
	\end{equation}
	for any $\varphi^+$ satisfying the above conditions. Notice that proposition \ref{prop:rhovaclimituniform} does not apply for $\gamma<0$ because the support of the test function stops being bounded, so we lose uniformity.
	
	The left-hand side of \eqref{limsupN} does not depend on $\varphi^+$, so we can optimize the upper bound by taking the infimum over the allowed choices of $\varphi^+$. Recall that $\varphi^+$ is required to be greater than the indicator function $\theta_{(-1,1)}$, but may be taken arbitrarily close to it. Hence, we have
	\begin{equation}
		\begin{split}
			&\inf\limits_{\varphi^+ > \theta_{(-1,1)}} \int_{h_* - \delta}^{h_* + \delta} dh\, \rho_{\rm c}(h)\, \varphi^+_{h_*, \delta}(h) = \delta^{-1} \int_{h_* - \delta}^{h_* + \delta} dh\, \rho_{\rm c}(h), \\
			&\inf\limits_{\varphi^+ > \theta_{(-1,1)}} \int dx\, \varphi^+(x) = 2.
		\end{split}
	\end{equation}
	Therefore, taking the infimum on the right-hand side of \eqref{limsupN} yields
	\begin{equation}
		\begin{split}
			\limsup\limits_{J \rightarrow \infty} \left[\sqrt{2}\, J^{\frac{1}{2} + \gamma} e^{-4\pi\sqrt{A J}} \mathcal{N}(h_*, \delta, \gamma, J)\right] \leqslant \begin{cases}
				\displaystyle \int_{h_* - \delta}^{h_* + \delta} dh\, \rho_{\rm c}(h), & \gamma = 0, \\[2.5ex]
				\displaystyle 2\delta\, \rho_{\rm c}(h_*), & 0 < \gamma < 1/4.
			\end{cases}
		\end{split}
	\end{equation}

	The analogous argument applied to $\varphi^-$ gives the corresponding lower bound on the liminf, which is the same as the upper bound on the limsup. This proves the corollary.
\end{proof}

The second consequence of theorem \ref{theorem:error} is about the level spacing, which is defined to be the maximal difference between two neighboring states in the spectrum. To be precise, let us fix a spin $J$. Notice that ${\rm supp}(\rho_J)$ is a closed set, so $(a,b)\backslash {\rm supp}(\rho_J)$ is a disjoint union of open intervals. The maximal level spacing $\ell_J(a,b)$ is then defined by the maximal size among these small intervals, or zero if ${\rm supp}(\rho_J)$ covers the whole interval $(a,b)$.

For example, when the spectrum within interval $(a,b)$ is finite, we can order the spectrum:
$$a<h_1<h_2<h_3<\ldots<h_N< b.$$
Then the maximal level spacing within this interval is given by
\begin{equation}
	\ell_J(a,b)=\max\limits_{0\leqslant i\leqslant N+1}\abs{h_{i+1}-h_i},\quad {\rm where\ } h_0=a,\ h_{N+1}=b.
\end{equation}
\begin{corollary}\label{corr:spacing}
	Let $\ell_J(a,b)$ be the maximal level spacing of spin-$J$ Virasoro primary states with $h\in (a,b)\subset(A,\infty)$. Then we have
	\begin{equation}
		\begin{split}
			\lim\limits_{J\rightarrow\infty}J^{\frac{1}{4}-\epsilon}\ell_J(a,b)=0,\quad\forall \epsilon>0.
		\end{split}
	\end{equation}
\end{corollary}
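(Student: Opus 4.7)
The plan is to argue by contradiction: any hypothetical gap of size $\gtrsim J^{-1/4+\epsilon_0}$ in the spin-$J$ spectrum would force the smeared integral of a suitable bump function against $\rho_J$ to vanish, whereas theorem \ref{theorem:error} together with proposition \ref{prop:rhovaclimituniform} would force the same integral to equal a strictly positive leading term proportional to $\rho_{\rm c}(h_*)$.

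First, suppose for contradiction that $J^{1/4-\epsilon_0}\ell_J(a,b)\not\to 0$ for some $\epsilon_0>0$. Then there exist $c>0$ and a subsequence $J_k\to\infty$ with $\ell_{J_k}(a,b)\geq c J_k^{-1/4+\epsilon_0}$. For each $k$ pick an open gap interval $(u_k,v_k)\subset (a,b)\setminus\supp(\rho_{J_k})$ of length at least $\tfrac{c}{2}J_k^{-1/4+\epsilon_0}$, and let $h_k=(u_k+v_k)/2$. By compactness of $[a,b]$ we may pass to a further subsequence so that $h_k\to h_*\in[a,b]$; since $a>A$ we have $h_*>A$, hence $\rho_{\rm c}(h_*)>0$ by \eqref{def:rho0}.

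Next, fix a non-negative $\varphi\in\mathcal{D}([-1,1])$ with $\int\varphi\,dx>0$, pick any $\epsilon'\in(0,\epsilon_0)$, and set $\lambda_{J_k}=\delta J_k^{-1/4+\epsilon'}$ with $\delta>0$ fixed. Because $\epsilon'<\epsilon_0$, for all sufficiently large $k$ we have $2\lambda_{J_k}<v_k-u_k$, so $\supp\varphi_{h_k,\lambda_{J_k}}\subset(u_k,v_k)$ and
\begin{equation}
\int dh\,\varphi_{h_k,\lambda_{J_k}}(h)\,\rho_{J_k}(J_k+2h)=0\,.
\end{equation}
On the other hand, theorem \ref{theorem:error} (applied with $h_*=h_k$, which lies in the bounded set $[a,b]$, and with any $p\geq 1$) gives
\begin{equation}
\int dh\,\varphi_{h_k,\lambda_{J_k}}(h)\,\rho_{J_k}(J_k+2h)
=\int dh\,\varphi_{h_k,\lambda_{J_k}}(h)\,\rho_{J_k,\rm{vac}}(J_k+2h)+\frac{e^{4\pi\sqrt{AJ_k}}}{\sqrt{2J_k}}\,o(1)\,,
\end{equation}
while proposition \ref{prop:rhovaclimituniform}, combined with the change of variable $x=(h-h_k)/\lambda_{J_k}$ and continuity of $\rho_{\rm c}$ at $h_*$, yields
\begin{equation}
\int dh\,\varphi_{h_k,\lambda_{J_k}}(h)\,\rho_{J_k,\rm{vac}}(J_k+2h)=\frac{e^{4\pi\sqrt{AJ_k}}}{2\sqrt{2J_k}}\Bigl[\rho_{\rm c}(h_*)\int\varphi(x)\,dx+o(1)\Bigr]\,.
\end{equation}
Comparing these three equalities and dividing through by $e^{4\pi\sqrt{AJ_k}}/\sqrt{2J_k}$ forces $\rho_{\rm c}(h_*)\int\varphi\,dx=0$, contradicting $\rho_{\rm c}(h_*)>0$.

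The main obstacle is precisely the one flagged in the footnote: the location $h_k$ of the maximal gap can wander with $k$, so one must use the uniformity of the error estimates in the base-point $h_*$ of the smearing, which is exactly what theorem \ref{theorem:error} and proposition \ref{prop:rhovaclimituniform} provide for $h_*$ in a bounded interval. A pointwise-in-$h_*$ statement such as corollary \ref{corr:count} on its own is not enough.
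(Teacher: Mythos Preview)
Your argument is correct and follows essentially the same contradiction strategy as the paper: place a nonnegative bump function inside a hypothetical gap and compare the vanishing integral with the uniformly positive leading term coming from theorem~\ref{theorem:error} and proposition~\ref{prop:rhovaclimituniform}. The paper works with a uniform infimum over $h_*$ rather than extracting a convergent subsequence $h_k\to h_*$, but this is cosmetic.

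There is one genuine omission. You invoke ``since $a>A$'', but the hypothesis $(a,b)\subset(A,\infty)$ allows $a=A$, and in that boundary case the gap centers $h_k$ can accumulate at $A$, giving $h_*=A$ and $\rho_{\rm c}(h_*)=0$; your contradiction then evaporates because the right-hand side of your last displayed equation is merely $\frac{e^{4\pi\sqrt{AJ_k}}}{2\sqrt{2J_k}}\,o(1)$. The paper deals with this via the elementary reduction $\ell_J(A,b)\leq \ell_J(A+1/J,b)+1/J$, and then uses that $\rho_{\rm c}(A+1/J)\sim 8\pi^2\sqrt{2/J}$ decays only polynomially in $J$, which still dominates the $o(J^{-p})$ error from theorem~\ref{theorem:error}. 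Your proof works verbatim for $a>A$; adding this small reduction step closes the case $a=A$.
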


\begin{proof}
It suffices to prove the statement for the sequence of intervals $(A+1/J,b)$ because
\begin{equation}
	\ell_J(a,b)\leqslant\ell_J(A,b)\leqslant\ell_J(A+1/J,b)+1/J,\quad\forall a\geqslant A,
\end{equation}
which implies
\begin{equation}
	\limsup\limits_{J\rightarrow\infty}J^{\frac{1}{4}-\epsilon}\ell_J(a,b)\leqslant\limsup\limits_{J\rightarrow\infty}J^{\frac{1}{4}-\epsilon}\ell_J(A+1/J,b),\quad \forall a\geqslant A,\ \epsilon>0.
\end{equation}
Pick a test function $\varphi(h)$ in $\mathcal{D}([-1,1])$ which is non-negative, $\varphi(h)\geqslant 0$, and not identically zero, so $\hat \varphi(0)>0$. If the corollary would not hold then
\begin{equation}
	\label{corollaryaux}
	\liminf_{J \to \infty} \inf_{h_* \in (A+1/J,b)} \int_{T_\text{gap}}^\infty\ dh\ \varphi_{h_{*}, \delta J^{-1/4+\epsilon}}(h) \rho_{J}(J + 2h) = 0
\end{equation}
since we can then pick a sequence $J_n \to \infty$ and $h_{*,n} \in (A+1/J,b)$ such that there are no operators in an interval of width $2 \delta J^{-1/4 + \epsilon}$ around $h_{*,n}$. In actuality, however, theorem \ref{theorem:error} and proposition \ref{prop:rhovaclimituniform} established that
\begin{equation}
	\int_{T_\text{gap}}^\infty\ dh\ \varphi_{h_{*}, \delta J^{-1/4+\epsilon}}(h) \rho_{J}(J + 2h) \underset{J \to \infty}\sim \hat{\varphi}(0) \rho_{\rm c}(h_*) \frac{e^{4 \pi \sqrt{AJ}}}{2\sqrt{2J}}\,.
\end{equation}
Even better, we know that this behavior holds uniformly for $h_*$ in a finite interval like $(A,b)$, so for sufficiently large $J$ we have
\begin{equation}
	\int_{T_\text{gap}}^\infty\ dh\ \varphi_{h_{*}, \delta J^{-1/4+\epsilon}}(h) \rho_{J}(J + 2h) \geqslant \frac{1}{2}\hat{\varphi}(0) \rho_{\rm c}(h_*) \frac{e^{4 \pi \sqrt{AJ}}}{2\sqrt{2J}}\,,
\end{equation}
uniformly for $h_* \in (A,b)$. Now since we are taking the infimum among $h_*\in(A+1/J,b)$, by the fact that $\rho_{\rm c}(h_*)$ is monotonically increasing in $h_*$, it suffices to consider the value of $\rho_{\rm c}(h_*)$ at $h_*=A+1/J$. 

Since $\rho_{\rm c}(h)=8\pi^2\sqrt{2(h-A)}+O\left((h-A)^{3/2}\right)$ near $h=A$, we have
\begin{equation}
	\begin{split}
		\liminf_{J \to \infty}\inf_{h_* \in [A+1/J,b]}J^{1/2}\rho_{\rm c}(h_*)>0,
	\end{split}
\end{equation}
for fixed $b$. Therefore,
\begin{equation}
	\begin{split}
		\liminf_{J \to \infty} \inf_{h_* \in [A+1/J,b]}J e^{-4\pi\sqrt{AJ}}\int_{T_\text{gap}}^\infty \varphi_{h_{*}, \delta J^{-1/4+\epsilon}}(h) \rho_{J}(J + 2h)>0\,,
	\end{split}
\end{equation}
since the error from the non-vacuum term is suppressed by a higher power of $1/J$ according to theorem \ref{theorem:error}. Therefore, the left-hand side of equation \eqref{corollaryaux} is actually infinity.
\end{proof}

\section{Outlook}
In this paper, we have established rigorous results about the density of Virasoro primaries with large spin in two-dimensional unitary CFTs with $c>1$, having a normalizable vacuum and a twist gap in the spectrum of Virasoro primaries.  The number of Virasoro primaries with twist greater than or equal to $(c-1)/12$ has a Cardy-like growth, \emph{i.e.}  $(2J)^{-1/2}e^{4\pi\sqrt{AJ}}$ for large $J$ while  the number of Virasoro primaries strictly below $(c-1)/12$ has a sub-Cardy growth, \emph{i.e.} $o\left(J^{-1/2}e^{4\pi\sqrt{AJ}}\right)$.  Furthermore,  we have proved the maximal level spacing of Virasoro primary states with spin $J$,  and twist lying in some bounded interval contained in $[(c-1)/12,\infty)$,  goes to $0$ at least as fast as $J^{-1/4+\epsilon}$ as $J\to\infty$.  

Our bounds followed from simple estimates on the universal and non-universal terms in the modular invariant partition function. In fact, from our proofs it is clear that we can also provide estimates for the operator density any \emph{finite} spin $J$, with errors expressed only in terms of $c$, $T_\text{gap}$ and $\tilde Z(2\pi,2\pi)$, which corresponds to the value of the partition function at the modular invariant point $\tau = i$. It would be very interesting to work out the details. One might even use the numerical modular bootstrap of \cite{Hellerman:2009bu}, see also \cite{Collier:2016cls,Afkhami-Jeddi:2019zci,Hartman:2019pcd,Benjamin:2022pnx}, to constrain $\tilde Z(2\pi,2\pi)$ in terms of $c$ and $T_\text{gap}$ and thus eliminate it from our error estimates. It would also be nice to take into account subleading terms or perhaps the other modular images of the vacuum term, leading possibly to a Rademacher-type expansion \cite{rademacher1938fourier,rademacher1943expansion,Eberhardt:2023xck,Baccianti:2025gll}.

Another natural direction for future research is to extend these results to CFTs with discrete global symmetry following the approach of \cite{Pal:2020wwd}, see also \cite{Harlow:2021trr,Kang:2022orq,Lin:2022dhv}.  One expects that the results should hold independently within each symmetry sector, labeled by irreducible representations. The numerical modular bootstrap in the presence of discrete global symmetries was studied in \cite{Lin:2021udi,Lin:2023uvm}. For CFTs with continuous symmetries,  the Virasoro symmetry often gets extended to a bigger chiral algebra. In this case,  one needs to identify the largest chiral algebra and impose a twist gap with respect to its primaries. 

One could further hope to study torus two-point function and genus-two partition functions using similar techniques.   However, the relevant conformal blocks in these cases are more intricate than the simple exponential form $e^{-\beta_L h-\beta_R \bar h}$.   A particularly interesting analogue of the higher-dimensional lightcone bootstrap would be the analysis of sphere four-point functions. While in higher dimensions, the large-spin limit leads to the emergence of mean field theory \cite{Fitzpatrick:2012yx,Komargodski:2012ek,Pal:2022vqc,vanRees:2024xkb}, in two dimensions, the corresponding universal theory is known as Virasoro mean field theory (VMFT) \cite{Collier:2018exn}. Establishing a rigorous foundation for VMFT using the techniques of this paper would be a significant step forward.

CFTs with large central charge and appropriate sparseness condition exhibit universal features as shown in \cite{Hartman:2014oaa} and \cite{Dey:2024nje}.  These features are the hallmarks of semiclassical Einstein gravity and intimately connected with black hole microstate counting \cite{Strominger:1996sh,Strominger:1997eq,Sen:2007qy,Sen:2012dw,Hartman:2014oaa}.  It would be intriguing to extend the results of the present paper to the large central charge regime,  and explore their implications for highly rotating BTZ black holes in AdS$_3$.   Finally,  we expect that the techniques of this paper may prove useful to generalizations of the Cardy formula appearing in, for example, \cite{Verlinde:2000wg,Kutasov:2000td,Shaghoulian:2015kta,Kraus:2016nwo,Das:2017vej, Das:2017cnv,Cardy:2017qhl,Brehm:2018ipf,Collier:2019weq,Pal:2020wwd,Harlow:2021trr,Belin:2021ryy,Anous:2021caj,Lin:2022dhv,Benjamin:2023qsc,Benjamin:2024kdg}.

\section*{Acknowledgments}
We are grateful to Lorenz Eberhardt, Dalimil Mazáč, Eric Perlmutter and Ioannis Tsiares for discussions. SP is supported by the U.S. Department of Energy, Office of Science, Office of High Energy Physics, under Award Number DE-SC0011632, and by the Walter Burke Institute for Theoretical Physics. JQ is supported by Simons Foundation grant 994310 (Simons Collaboration on Confinement and QCD Strings). BvR acknowledges funding from the European Union (ERC “QFTinAdS”, project number 101087025).

\bibliography{lightcone}
\bibliographystyle{utphys}

\end{document}